\numberwithin{equation}{subsection}
\newcommand{\sqsp}{\renewcommand{\baselinestretch}{1.1}\tiny\normalsize}
\newtheorem{theorem}[subsection]{Theorem}
\newtheorem{lemma}[subsection]{Lemma}
\newtheorem{proposition}[subsection]{Proposition}
\newtheorem{corollary}[subsection]{Corollary}
\theoremstyle{definition}
\newtheorem{definition}[subsection]{Definition}
\newtheorem{example}[subsection]{Example}
\newtheorem{remark}[subsection]{Remark}
\newcommand{\fg}{\mathfrak{g}}
\newcommand{\ba}{\mathbf{a}}
\newcommand{\bc}{\mathbf{c}}
\newcommand{\bC}{\mathbf{C}}
\newcommand{\bk}{\mathbf{k}}
\newcommand{\bN}{\mathbf{N}}
\newcommand{\bZ}{\mathbf{Z}}
\def\uhg{U_h(\fg)}
\def\uhsl{U_h(sl_2)}
\def\ch{\bC[[h]]}
\def\an{A^{(n)}}
\def\rn{R^{\alpha^n}}
\def\rt{R^{\alpha^t}}
\def\ralpha{R^{\alpha}}
\def\zn{\bZ'_{/n}}
\def\kg{\bk G}
\def\kofg{\bk(G)}
\def\hc{H_{\bc}}
\def\alphac{\alpha_{\bc}}
\def\vn{\widetilde{V}_n}
\def\vone{\widetilde{V}_1}
\DeclareMathOperator{\Hom}{Hom}
\DeclareMathOperator{\diag}{diag}
\def\Bmatrix{{\begin{pmatrix}q & 0 & 0 & 0\\
0 & 0 & \gamma^{-1} & 0\\
0 & \gamma^{-1} & \gamma^{-1}(q - q^{-1}) & 0 \\
0 & 0 & 0 & q\gamma^{-2}\end{pmatrix}}}
\def\Brmatrix{{\begin{pmatrix}q & 0 & 0 & 0\\
0 & 0 & 1 & 0\\
0 & 1 & q - q^{-1} & 0 \\
0 & 0 & 0 & q\end{pmatrix}}}
\begin{document}

\title{Hom-quantum groups I: quasi-triangular Hom-bialgebras}
\author{Donald Yau}

\begin{abstract}
We introduce a Hom-type generalization of quantum groups, called quasi-triangular Hom-bialgebras.  They are non-associative and non-coassociative analogues of Drinfel'd's quasi-triangular bialgebras, in which the non-(co)associativity is controlled by a twisting map.  A family of quasi-triangular Hom-bialgebras can be constructed from any quasi-triangular bialgebra, such as Drinfel'd's quantum enveloping algebras.  Each quasi-triangular Hom-bialgebra comes with a solution of the quantum Hom-Yang-Baxter equation, which is a non-associative version of the quantum Yang-Baxter equation.  Solutions of the Hom-Yang-Baxter equation can be obtained from modules of suitable quasi-triangular Hom-bialgebras.
\end{abstract}

\keywords{The Yang-Baxter equation, quasi-triangular bialgebra, quantum group, Hom-bialgebra.}

\subjclass[2000]{16W30, 17A30, 17B37, 17B62, 81R50}

\address{Department of Mathematics\\
    The Ohio State University at Newark\\
    1179 University Drive\\
    Newark, OH 43055, USA}
\email{dyau@math.ohio-state.edu}

\date{\today}
\maketitle

\sqsp

\section{Introduction}


This paper is part of an on-going effort \cite{yau5,yau6,yau7} to study twisted, Hom-type generalizations of the various Yang-Baxter equations and related algebraic structures.  A Hom-type generalization of the Yang-Baxter equation \cite{baxter,baxter2,yang}, called the Hom-Yang-Baxter equation (HYBE), and its relationships to the braid relations and braid group representations \cite{artin2,artin} were studied in \cite{yau5,yau6}.  Hom versions of the classical Yang-Baxter equation \cite{skl1,skl2} and Drinfel'd's Lie bialgebras \cite{dri83,dri87} were studied in \cite{yau7}.

Here we consider twisted, Hom-type generalizations of quantum groups and the quantum Yang-Baxter equation (QYBE).  The quantum groups being generalized in this paper are Drinfel'd's quasi-triangular bialgebras \cite{dri87}.  Our generalized quantum groups (the quasi-triangular Hom-bialgebras in the title) are, in general, non-associative, non-coassociative, non-commutative, and non-cocommutative.  We also refer to these objects colloquially as \emph{Hom-quantum groups}.  As we will describe below, suitable quasi-triangular Hom-bialgebras give rise to solutions of the HYBE.

Let us first recall the definition of a quasi-triangular bialgebra and its relationships to the various Yang-Baxter equations.  A \emph{quasi-triangular bialgebra} $(A,R)$ \cite{dri87} consists of a bialgebra $A$ and an invertible element $R \in A^{\otimes 2}$ such that the following three conditions are satisfied:
\begin{equation}
\label{qtb}
\begin{split}
\Delta^{op}(x)R &= R\Delta(x),\\
(\Delta \otimes Id)(R) &= R_{13}R_{23}, \quad (Id \otimes \Delta)(R) = R_{13}R_{12}.
\end{split}
\end{equation}
Here $\Delta^{op} = \tau \circ \Delta$ with $\tau$ the twist isomorphism, $R_{12} = R \otimes 1$, $R_{23} = 1 \otimes R$, and $R_{13} = (\tau \otimes Id)R_{23}$.  The element $R$, called the \emph{quasi-triangular structure}, satisfies the QYBE
\begin{equation}
\label{qybe}
R_{12}R_{13}R_{23} = R_{23}R_{13}R_{12}.
\end{equation}
Examples of quasi-triangular bialgebras include Drinfel'd's quantum enveloping algebra $\uhg$ \cite{dri87} of a semi-simple Lie algebra or a Kac-Moody algebra $\fg$ \cite{kac}, the anyonic quantum groups \cite{majid92}, and the quantum line \cite{lm}, among many others.  The QYBE and quasi-triangular bialgebras are motivated by work on the quantum inverse scattering method \cite{fad80,fad,fad0,fad1,fad2,skl1,skl2} and exactly solved models in statistical mechanics \cite{baxter,baxter2,yang}.  Comprehensive expositions on quasi-triangular bialgebras can be found, for example, in the books \cite{cp,es,kassel,majid}.

Quasi-triangular bialgebras and the QYBE are related to the Yang-Baxter equation as follows.  Consider a module $V$ over a quasi-triangular bialgebra $(A,R)$ and the operator $B \colon V^{\otimes 2} \to V^{\otimes 2}$ defined as $B(v \otimes w) = \tau(R(v \otimes w))$.  As a consequence of the QYBE \eqref{qybe}, the operator $B$ satisfies the \emph{Yang-Baxter equation} (YBE) \cite{baxter,baxter2,yang},
\begin{equation}
\label{ybe}
(B \otimes Id)(Id \otimes B)(B \otimes Id) = (Id \otimes B)(B \otimes Id)(Id \otimes B).
\end{equation}
So one quasi-triangular bialgebra $(A,R)$ gives rise to many solutions of the YBE via its modules.  For this reason, the quasi-triangular structure $R$ is also known as the \emph{universal $R$-matrix}.  The reader may consult \cite{perk} for discussion of different versions of the YBE and of their uses in physics.


Our generalizations of quantum groups and the QYBE (as well as the HYBE) are all motivated by Hom-Lie algebras and other Hom-type algebras.  Roughly speaking, a Hom-type structure arises when one strategically replaces the identity map in the defining axioms of a classical structure by a general twisting map $\alpha$.  A classical structure should be a particular example of a Hom-type structure in which the twisting map is the identity map.  In particular, a \emph{Hom-Lie algebra} $(L,[-,-],\alpha)$ has an anti-symmetric bracket $[-,-] \colon L^{\otimes 2} \to L$ that satisfies the \emph{Hom-Jacobi identity},
\[
[[x,y],\alpha(z)] + [[z,x],\alpha(y)] + [[y,z],\alpha(x)] = 0,
\]
where $\alpha$ is an algebra self-map of the module $L$.  Hom-Lie algebras were introduced in \cite{hls} to describe the structures on some $q$-deformations of the Witt and the Virasoro algebras.  Earlier precursors of Hom-Lie algebras can be found in \cite{hu,liu}.  Hom-Lie algebras are closely related to deformed vector fields \cite{ama,hls,ls,ls2,ls3,rs,ss} and number theory \cite{larsson}.

One can similarly define a \emph{Hom-associative algebra} \cite{ms}, which satisfies the \emph{Hom-associativity axiom} $(xy)\alpha(z) = \alpha(x)(yz)$ (Definition ~\ref{def:homas}).  So a Hom-associative algebra is not associative, but the non-associativity is controlled by the twisting map $\alpha$.  One obtains a Hom-Lie algebra from a Hom-associative algebra via the commutator bracket \cite[Proposition 1.7]{ms}.  Conversely, the enveloping Hom-associative algebra of a Hom-Lie algebra is constructed in \cite{yau} and is studied further in \cite{yau3}.  Other papers concerning Hom-Lie algebras and related Hom-type structures are \cite{atm,fg,fg2,gohr,jl,ms2,ms3,ms4,yau,yau2,yau4,yau5,yau6,yau7}.

We now describe the main results of this paper concerning Hom-type generalizations of quasi-triangular bialgebras and the QYBE.  Precise definitions, statements of results, and proofs are given in later sections.

Following the patterns of Hom-Lie and Hom-associative algebras, one can define Hom-bialgebras (Definition ~\ref{def:homas}), which are non-associative and non-coassociative generalizations of bialgebras in which the non-(co)associativity is controlled by the twisting map $\alpha$.  In section ~\ref{sec:qthb} we introduce \emph{quasi-triangular Hom-bialgebras}(Definition ~\ref{def:qthb}), generalizing Drinfel'd's quasi-triangular bialgebras by strategically replacing the identity map by a twisting map $\alpha$ in the defining axioms.  We show that a quasi-triangular Hom-bialgebra comes equipped with a solution $R$ of the \emph{quantum Hom-Yang-Baxter equation} (QHYBE) (Theorem ~\ref{thm:qhybe}), which is a non-associative analogue of the QYBE.  In fact, due to the non-associative nature of a Hom-bialgebra, there are two different versions of the QHYBE (\eqref{qhybe} and \eqref{qhybe'}), both of which hold in a quasi-triangular Hom-bialgebra.

In section ~\ref{sec:example} we give two general procedures by which quasi-triangular Hom-bialgebras can be constructed.  First we show that every quasi-triangular bialgebra $A$ can be twisted into a family of quasi-triangular Hom-bialgebras $A_\alpha$, where $\alpha$ runs through the bialgebra endomorphisms on $A$.  Here the twisting procedure is applied to the (co)multiplication in $A$ (Theorem ~\ref{thm:twist1}).  On the other hand, if $A$ is a quasi-triangular Hom-bialgebra with a surjective twisting map $\alpha$, then we obtain a sequence of quasi-triangular Hom-bialgebras by replacing $R \in A^{\otimes 2}$ with $(\alpha^n \otimes \alpha^n)(R)$ for $n \geq 1$ (Theorem ~\ref{thm:twist2}).  These twisting procedures yield lots of examples of quasi-triangular Hom-bialgebras.  As illustrations, we apply these twisting procedures to Majid's anyon-generating quantum groups \cite{majid92} (Example ~\ref{ex:anyon}), the quasi-triangular group bialgebra $\bk G$ (Example ~\ref{ex:groupbialg}) and function bialgebra $\bk(G)$ (Example ~\ref{ex:function}) for finite abelian groups $G$, and Drinfel'd's quantum enveloping algebras $\uhg$ (Examples ~\ref{ex:env} and ~\ref{ex:sl2}) for semi-simple Lie algebras $\fg$.  Since $\uhg$ is non-commutative and non-cocommutative, the quasi-triangular Hom-bialgebras obtained by twisting $\uhg$ are simultaneously non-(co)associative and non-(co)commutative.

The relationship between the QYBE \eqref{qybe} and the YBE \eqref{ybe} described above is generalized to the Hom setting in section ~\ref{sec:hybe}.  We show that a module (suitably defined) over a quasi-triangular Hom-bialgebra with $\alpha$-invariant $R$ has a canonical solution of the HYBE given by $B = \tau \circ R$ (Theorem ~\ref{thm:hybe}).  This generalizes the solution of the YBE associated to a module over a quasi-triangular bialgebra, as discussed above.

We illustrate the concept of modules over a quasi-triangular Hom-bialgebra with $\alpha$-invariant $R$ in section ~\ref{sec:module}.  In particular, we consider the quasi-triangular Hom-bialgebra $\uhsl_\alpha$ (Example ~\ref{ex:sl2}) and a sequence of modules $\vn$ over it (Proposition ~\ref{vnmodule}).  Here $\vn$ is topologically free of rank $n+1$ over $\ch$.  We also write down explicitly the matrix representing the canonical solution of the HYBE associated to $\vone$ (Proposition ~\ref{BalphaV1}).

In the next paper in this series, we will discuss another class of Hom-quantum groups, called dual quasi-triangular (DQT) Hom-bialgebras, which are also non-(co)associative and non-(co)commutative in general.  These DQT Hom-bialgebras are Hom-type generalizations of dual quasi-triangular bialgebras \cite{hay,lt,majid91,sch}.  In particular, they can accommodate Hom-type generalizations of the FRT quantum groups \cite{rft}, quantum matrices $M_q(2)$, the quantum general linear group $GL_q(2)$, and the quantum special linear group $SL_q(2)$.  Applied to the $M_q(2)$-coaction on the quantum plane, we obtain Hom-type, non-(co)associative analogues of quantum/non-commutative geometry.  DQT Hom-bialgebras are also equipped with solutions of a Hom version of the operator form of the QYBE.  Solutions of the HYBE can be generated from comodules of suitable DQT Hom-bialgebras.

\section{Quasi-triangular Hom-bialgebras}
\label{sec:qthb}

In this section, we first recall the definition of a Hom-bialgebra (Definition ~\ref{def:homas}).  Then we define quasi-triangular Hom-bialgebras (Definition ~\ref{def:qthb}) and establish the QHYBE (Theorem ~\ref{thm:qhybe}).  Several characterizations of the axioms of a quasi-triangular Hom-bialgebra are given at the end of this section (Theorems ~\ref{thm:axiom1} and \ref{thm:axiom2}).  Concrete examples of quasi-triangular Hom-bialgebras are given in the next section.

\subsection{Conventions and notations}

We work over a fixed commutative ring $\bk$ of characteristic $0$.  Modules, tensor products, and linear maps are all taken over $\bk$.  If $V$ and $W$ are $\bk$-modules, then $\tau \colon V \otimes W \to W \otimes V$ denotes the twist isomorphism, $\tau(v \otimes w) = w \otimes v$.  For a map $\phi \colon V \to W$ and $v \in V$, we sometimes write $\phi(v)$ as $\langle \phi,v\rangle$.  If $\bk$ is a field and $V$ is a $\bk$-vector space, then the linear dual of $V$ is $V^* = \Hom(V,\bk)$.  From now on, whenever the linear dual $V^*$ is in sight, we tacitly assume that $\bk$ is a characteristic $0$ field.

Given a bilinear map $\mu \colon V^{\otimes 2} \to V$ and elements $x,y \in V$, we often write $\mu(x,y)$ as $xy$ and put in parentheses for longer products.  For a map $\Delta \colon V \to V^{\otimes 2}$, we use Sweedler's notation \cite{sweedler} for comultiplication: $\Delta(x) = \sum_{(x)} x_1 \otimes x_2$.  To simplify the typography in computations, we often omit the summation sign $\sum_{(x)}$.


\begin{definition}
\label{def:homas}
\begin{enumerate}
\item
A \textbf{Hom-associative algebra} \cite{ms} $(A,\mu,\alpha)$ consists of a  $\bk$-module $A$, a bilinear map $\mu \colon A^{\otimes 2} \to A$ (the multiplication), and a linear self-map $\alpha \colon A \to A$ such that (i) $\alpha \circ \mu = \mu \circ \alpha^{\otimes 2}$ (multiplicativity) and (ii) $\mu \circ (\alpha \otimes \mu) = \mu \circ (\mu \otimes \alpha)$ (Hom-associativity).
\item
A \textbf{Hom-coassociative coalgebra} \cite{ms2,ms4} $(C,\Delta,\alpha)$ consists of a $\bk$-module $C$, a linear map $\Delta \colon C \to C^{\otimes 2}$ (the comultiplication), and a linear self-map $\alpha \colon C \to C$ such that (i) $\alpha^{\otimes 2} \circ \Delta = \Delta \circ \alpha$ (comultiplicativity) and (ii) $(\alpha \otimes \Delta) \circ \Delta = (\Delta \otimes \alpha) \circ \Delta$ (Hom-coassociativity).
\item
A \textbf{Hom-bialgebra} \cite{ms2,yau3} is a quadruple $(A,\mu,\Delta,\alpha)$ in which $(A,\mu,\alpha)$ is a Hom-associative algebra, $(A,\Delta,\alpha)$ is a Hom-coassociative coalgebra, and the following condition holds:
\begin{equation}
\label{def:hombi}
\Delta \circ \mu = \mu^{\otimes 2} \circ (Id \otimes \tau \otimes Id) \circ \Delta^{\otimes 2}.
\end{equation}
\end{enumerate}
\end{definition}

The compatibility condition ~\eqref{def:hombi} can be restated as $\Delta(xy) = \sum_{(x)(y)} x_1y_1 \otimes x_2y_2$.

We will refer to $\alpha$ as the \emph{twisting map}.  In a Hom-bialgebra, the (co)multiplication is not (co)associative, but the non-(co)associativity is controlled by the twisting map $\alpha$.  In particular, a bialgebra is a Hom-bialgebra when equipped with $\alpha = Id$.  More generally, any bialgebra can be twisted into a Hom-bialgebra via a bialgebra morphism, as explained in the example below.

\begin{example}
\label{ex:homas}
\begin{enumerate}
\item
If $(A,\mu)$ is an associative algebra and $\alpha \colon A \to A$ is an algebra morphism, then $A_\alpha = (A,\mu_\alpha,\alpha)$ is a Hom-associative algebra with the twisted multiplication $\mu_\alpha = \alpha \circ \mu$ \cite{yau2}.  Indeed, the Hom-associativity axiom $\mu_\alpha \circ (\alpha \otimes \mu_\alpha) = \mu_\alpha \circ (\mu_\alpha \otimes \alpha)$ is equal to $\alpha^2$ applied to the associativity axiom of $\mu$.  Likewise, both sides of the multiplicativity axiom $\alpha \circ \mu_\alpha = \mu_\alpha \circ \alpha^{\otimes 2}$ are equal to $\alpha^2 \circ \mu$.
\item
Dually, if $(C,\Delta)$ is a coassociative coalgebra and $\alpha \colon C \to C$ is a coalgebra morphism, then $C_\alpha = (C,\Delta_\alpha,\alpha)$ is a Hom-coassociative coalgebra with the twisted comultiplication $\Delta_\alpha = \Delta \circ \alpha$.
\item
Combining the previous two cases, if $(A,\mu,\Delta)$ is a bialgebra and $\alpha \colon A \to A$ is a bialgebra morphism, then $A_\alpha = (A,\mu_\alpha,\Delta_\alpha,\alpha)$ is a Hom-bialgebra.  The compatibility condition \eqref{def:hombi} for $\Delta_\alpha = \Delta \circ \alpha$ and $\mu_\alpha = \alpha \circ \mu$ is straightforward to check.\qed
\end{enumerate}
\end{example}

It is clear that the axioms of a Hom-coassociative coalgebra are dual to those of a Hom-associative algebra.  The following examples have to do with this duality.

\begin{example}
\label{ex:duality}
\begin{enumerate}
\item
Let $(C,\Delta,\alpha)$ be a Hom-coassociative coalgebra and $C^*$ be the linear dual of $C$.  Then we have a Hom-associative algebra $(C^*,\Delta^*,\alpha^*)$, where
\begin{equation}
\label{deltadual}
\langle \Delta^*(\phi,\psi),x\rangle = \langle \phi \otimes \psi, \Delta(x)\rangle \quad \text{and} \quad
\alpha^*(\phi) = \phi \circ \alpha
\end{equation}
for all $\phi, \psi \in C^*$ and $x \in C$.  This is checked in exactly the same way as for (co)associative algebras \cite[2.1]{abe}, as was done in \cite[Corollary 4.12]{ms4}.
\item
Likewise, suppose that $(A,\mu,\alpha)$ is a finite dimensional Hom-associative algebra. Then $(A^*,\mu^*,\alpha^*)$ is a Hom-coassociative coalgebra, where
\begin{equation}
\label{mudual}
\langle \mu^*(\phi), x \otimes y\rangle = \langle \phi, \mu(x,y)\rangle  \quad \text{and} \quad
\alpha^*(\phi) = \phi \circ \alpha
\end{equation}
for all $\phi \in A^*$ and $x,y \in A$ \cite[Corollary 4.12]{ms4}.  In what follows, whenever $\mu^* \colon A^* \to A^* \otimes A^*$ is in sight, we tacitly assume that $A$ is finite dimensional.
\item
Combining the previous two examples, suppose that $(A,\mu,\Delta,\alpha)$ is a finite dimensional Hom-bialgebra.  Then so is $(A^*,\Delta^*,\mu^*,\alpha^*)$, where $\Delta^*$, $\mu^*$, and $\alpha^*$ are defined as in \eqref{deltadual} and \eqref{mudual}.\qed
\end{enumerate}
\end{example}

To generalize quantum groups to the Hom setting, we need a suitably weakened notion of a multiplicative identity for Hom-associative algebras.


\begin{definition}
\label{def:weakunit}
\begin{enumerate}
\item
Let $(A,\mu,\alpha)$ be a Hom-associative algebra.  A \textbf{weak unit} \cite{fg2} of $A$ is an element $c \in A$ such that $\alpha(x) = cx = xc$ for all $x \in A$.  In this case, we call $(A,\mu,\alpha,c)$ a \textbf{weakly unital Hom-associative algebra}.
\item
Let $(A,\mu,\alpha,c)$ be a weakly unital Hom-associative algebra and $R \in A^{\otimes 2}$.  Define the following elements in $A^{\otimes 3}$:
\begin{equation}
\label{R123}
R_{12} = R \otimes c,\quad R_{23} = c \otimes R, \quad R_{13} = (\tau \otimes Id)(R_{23}).
\end{equation}
\end{enumerate}
\end{definition}
The elements in \eqref{R123} are our Hom-type generalizations of the elements involved in the QYBE \eqref{qybe} and in the definition of a quasi-triangular bialgebra.

\begin{example}[\cite{fg2} Example 2.2]\label{ex:weakunit}
If $(A,\mu,1)$ is a unital associative algebra, then the Hom-associative algebra $A_\alpha = (A,\mu_\alpha,\alpha)$ (Example ~\ref{ex:homas}) has a weak unit $c = 1$.  In fact, we have
\[
\mu_\alpha(1,x) = \alpha(\mu(1,x)) = \alpha(x) = \alpha(\mu(x,1)) = \mu_\alpha(x,1).
\]
So $(A,\mu_\alpha,\alpha,1)$ is a weakly unital Hom-associative algebra.\qed
\end{example}


We are now ready for the main definition of this paper, which gives a non-associative and non-coassociative version of a quasi-triangular bialgebra.

\begin{definition}
\label{def:qthb}
A \textbf{quasi-triangular Hom-bialgebra} is a tuple $(A,\mu,\Delta,\alpha,c,R)$ in which $(A,\mu,\Delta,\alpha)$ is a Hom-bialgebra, $c$ is a weak unit of $(A,\mu,\alpha)$, and $R \in A^{\otimes 2}$ satisfies the following three axioms:
\begin{subequations}
\label{qtaxioms}
\begin{align}
(\Delta \otimes \alpha)(R) &= R_{13}R_{23},\label{R1323}\\
(\alpha \otimes \Delta)(R) &= R_{13}R_{12},\label{R1312}\\
[(\tau\circ\Delta)(x)]R &= R\Delta(x)\label{RDelta}
\end{align}
\end{subequations}
for all $x \in A$.  The elements $R_{12}$, $R_{13}$, and $R_{23}$ are defined in \eqref{R123}.
\end{definition}

\begin{example}
A quasi-triangular bialgebra $(A,R)$ \eqref{qtb} in the sense of Drinfel'd \cite{dri87} becomes a quasi-triangular Hom-bialgebra when equipped with $\alpha = Id$ and $c = 1$.  In a quasi-triangular bialgebra, it is usually assumed that the quasi-triangular structure $R$ is invertible.  In that case, the axiom \eqref{RDelta} can be restated as $(\tau \circ \Delta)(x) = R\Delta(x)R^{-1}$.  However, in this paper, even when we refer to a quasi-triangular bialgebra (e.g., in Theorem ~\ref{thm:twist1} and Corollary ~\ref{cor:twist}), we will not have to use its counit or the invertibility of its quasi-triangular structure.  See also Remark ~\ref{rk:R}.\qed
\end{example}

Some remarks are in order.

\begin{remark}
\begin{enumerate}
\item
The multiplications in \eqref{qtaxioms} are computed in each tensor factor using the multiplication $\mu$ in $A$.  They all make sense because there is no iterated multiplication.
\item
Since a weak unit $c$ (Definition ~\ref{def:weakunit}) is not actually a multiplicative identity, it does not make much sense to talk about multiplicative inverse in a weakly unital Hom-associative algebra.  This is the reason for not insisting on $R$ (in Definition ~\ref{def:qthb}) being invertible and for stating \eqref{RDelta} without using $R^{-1}$.
\item
Write $R = \sum s_i \otimes t_i \in A^{\otimes 2}$.  From the definitions \eqref{R123} of the $R_{ij}$ and the requirement that $c$ be a weak unit, the three axioms in \eqref{qtaxioms} can be restated as:
\begin{subequations}
\label{qtax}
\begin{align}
(\Delta \otimes \alpha)(R) &= \sum \alpha(s_i) \otimes \alpha(s_j) \otimes t_it_j,\label{R1323'}\\
(\alpha \otimes \Delta)(R) &= \sum s_js_i \otimes \alpha(t_i) \otimes \alpha(t_j),\label{R1312'}\\
\sum x_2s_i \otimes x_1t_i &= \sum s_ix_1 \otimes t_ix_2,
\end{align}
\end{subequations}
where $\Delta(x) = \sum x_1 \otimes x_2$.
\end{enumerate}
\end{remark}


A major reason for introducing quasi-triangular bialgebras $(A,R)$ \cite{dri87} is that the quasi-triangular structure $R$ satisfies the QYBE \eqref{qybe}.  As we mentioned in the introduction, the fact that $R$ is a solution of the QYBE leads to solutions of the YBE \eqref{ybe} in the representations of $A$.  We will generalize this relationship between the QYBE and the YBE to the Hom setting in section ~\ref{sec:hybe}.  As a first step, we now show that the element $R$ in a quasi-triangular Hom-bialgebra satisfies two non-associative versions of the QYBE.

\begin{theorem}
\label{thm:qhybe}
Let $(A,\mu,\Delta,\alpha,c,R)$ be a quasi-triangular Hom-bialgebra.  Then $R$ satisfies the quantum Hom-Yang-Baxter equations (QHYBE)
\begin{equation}
\label{qhybe}
(R_{12}R_{13})R_{23} = R_{23}(R_{13}R_{12})
\end{equation}
and
\begin{equation}
\label{qhybe'}
R_{12}(R_{13}R_{23}) = (R_{23}R_{13})R_{12}.
\end{equation}
\end{theorem}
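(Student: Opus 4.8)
The plan is to reduce each of the two equations to a single common expression in $A^{\otimes 3}$, playing the axioms \eqref{R1323}, \eqref{R1312}, and \eqref{RDelta} against one another exactly as in Drinfel'd's classical argument, but keeping scrupulous track of the twisting map $\alpha$. Write $R = \sum s_i \otimes t_i$. The starting point is to read the two comultiplication axioms \emph{backwards}, as the identities $R_{13}R_{23} = (\Delta \otimes \alpha)(R)$ and $R_{13}R_{12} = (\alpha \otimes \Delta)(R)$, which is precisely their content in Definition ~\ref{def:qthb}. Two further observations make everything go. First, because the product on $A^{\otimes 3}$ is computed slotwise (Definition ~\ref{def:weakunit}), both $\tau \otimes Id$ and $Id \otimes \tau$ are algebra automorphisms of $A^{\otimes 3}$ regardless of (non)associativity. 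Second, by the defining relations \eqref{R123} of the $R_{ij}$ one has $(\tau \otimes Id)(R_{13}) = R_{23}$ and $(\tau \otimes Id)(R_{23}) = R_{13}$, and likewise $Id \otimes \tau$ interchanges $R_{12}$ and $R_{13}$. Hence $R_{23}R_{13} = (\tau \otimes Id)(R_{13}R_{23})$ and $R_{12}R_{13} = (Id \otimes \tau)(R_{13}R_{12})$.

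For \eqref{qhybe'} I would compute the two sides separately. On the left I rewrite $R_{13}R_{23}$ by \eqref{R1323} as $(\Delta \otimes \alpha)(R) = \sum s_{i1} \otimes s_{i2} \otimes \alpha(t_i)$ and then left-multiply by $R_{12} = \sum s_m \otimes t_m \otimes c$; the weak-unit relation $c\,\alpha(t_i) = \alpha^2(t_i)$ collapses the third leg, while the first two legs become $R\,\Delta(s_i)$, which the intertwining axiom \eqref{RDelta} (applied with $x = s_i$) converts into $(\tau \circ \Delta)(s_i)\,R$. On the right I use $R_{23}R_{13} = (\tau \otimes Id)(R_{13}R_{23}) = (\tau \otimes Id)(\Delta \otimes \alpha)(R)$ and then right-multiply by $R_{12}$, again simplifying $\alpha(t_i)\,c = \alpha^2(t_i)$. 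Both computations land on the common expression $\sum s_{i2}s_m \otimes s_{i1}t_m \otimes \alpha^2(t_i)$, establishing \eqref{qhybe'}.

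Equation \eqref{qhybe} is handled by the mirror-image argument, with the roles of the two tensor legs exchanged: I rewrite $R_{13}R_{12}$ by \eqref{R1312} as $(\alpha \otimes \Delta)(R)$, use the automorphism $Id \otimes \tau$ in place of $\tau \otimes Id$, and apply \eqref{RDelta} to the second-leg components, i.e. with $x = t_i$ acting in slots two and three; both sides then reduce to $\sum \alpha^2(s_i) \otimes t_{i2}s_m \otimes t_{i1}t_m$. The one thing to watch, and the only real subtlety, is that no step ever reassociates a triple product: the parenthesization in each QHYBE is exactly the one produced by the axioms, so Hom-associativity is never invoked. The bookkeeping that must be done carefully is instead the weak-unit normalization $xc = cx = \alpha(x)$, which accounts for the recurring $\alpha^2$ factors coming from \eqref{R123}, together with the correct pairing of each equation with its axiom (\eqref{R1312} versus \eqref{R1323}) and with the correct leg of the intertwining relation \eqref{RDelta}.
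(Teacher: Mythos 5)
Your proposal is correct and follows essentially the same route as the paper's proof: both rest on the swap identities $R_{12}R_{13} = (Id \otimes \tau)(R_{13}R_{12})$ and $R_{23}R_{13} = (\tau \otimes Id)(R_{13}R_{23})$, the conversion of $R_{13}R_{23}$ and $R_{13}R_{12}$ into $(\Delta \otimes \alpha)(R)$ and $(\alpha \otimes \Delta)(R)$ via \eqref{R1323} and \eqref{R1312}, and a slotwise application of \eqref{RDelta} with the weak-unit normalization $cx = xc = \alpha(x)$. The only difference is presentational: the paper runs a single chain of equalities from one side to the other in operator form, while you reduce both sides to a common Sweedler-notation expression.
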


\begin{proof}
First consider \eqref{qhybe}.  Recall that $R_{23} = c \otimes R$, and we have
\begin{equation}
\label{1213}
R_{12}R_{13} = (Id \otimes \tau)(R_{13}R_{12}).
\end{equation}
We compute as follows:
\begin{subequations}
\allowdisplaybreaks
\begin{align*}
(R_{12}R_{13})R_{23}
&= [(Id \otimes \tau)(R_{13}R_{12})]R_{23} \quad \text{by \eqref{1213}}\\
&= [(Id \otimes \tau)(\alpha \otimes \Delta)(R)]R_{23} \quad \text{by \eqref{R1312}}\\
&= [(\alpha \otimes (\tau\circ\Delta))(R)](c \otimes R)\\
&= (c \otimes R)[(\alpha \otimes \Delta)(R)] \quad \text{by \eqref{RDelta}}\\
&= R_{23}(R_{13}R_{12}) \quad \text{by \eqref{R1312}}.
\end{align*}
\end{subequations}
This proves that $R$ satisfies the QHYBE \eqref{qhybe}.

The other QHYBE \eqref{qhybe'} is proved by a similar computation.  Since $R_{12} = R \otimes c$ and
\begin{equation}
\label{1323}
(\tau \otimes Id)(R_{13}R_{23}) = R_{23}R_{13},
\end{equation}
we have:
\begin{subequations}
\allowdisplaybreaks
\begin{align*}
R_{12}(R_{13}R_{23})
&= (R \otimes c)[(\Delta \otimes \alpha)(R)] \quad\text{by \eqref{R1323}}\\
&= [((\tau\circ\Delta)\otimes\alpha)(R)](R \otimes c) \quad\text{by \eqref{RDelta}}\\
&= [(\tau \otimes Id) \circ (\Delta \otimes \alpha)(R)]R_{12}\\
&= [(\tau \otimes Id)(R_{13}R_{23})]R_{12} \quad\text{by \eqref{R1323}}\\
&= (R_{23}R_{13})R_{12}  \quad \text{by \eqref{1323}}.
\end{align*}
\end{subequations}
This finishes the proof.
\end{proof}

\begin{remark}
Let us make it clear that the two quantum Hom-Yang-Baxter equations \eqref{qhybe} and \eqref{qhybe'} are indeed different in general.  Writing $R = \sum s_i \otimes t_i$, the left-hand side in \eqref{qhybe} is
\begin{equation}
\label{qlhs}
\begin{split}
(R_{12}R_{13})R_{23}
&= (s_js_i \otimes t_jc \otimes ct_i)(c \otimes s_k \otimes t_k)\\
&= (s_js_i \otimes \alpha(t_j) \otimes \alpha(t_i))(c \otimes s_k \otimes t_k)\\
&= \alpha(s_js_i) \otimes \alpha(t_j)s_k \otimes \alpha(t_i)t_k\\
&= \alpha(s_j)\alpha(s_i) \otimes \alpha(t_j)s_k \otimes \alpha(t_i)t_k.
\end{split}
\end{equation}
Likewise, the left-hand side in \eqref{qhybe'} is
\begin{equation}
\label{qlhs'}
\begin{split}
R_{12}(R_{13}R_{23})
&= (s_j \otimes t_j \otimes c)(s_ic \otimes cs_k \otimes t_it_k)\\
&= (s_j \otimes t_j \otimes c)(\alpha(s_i) \otimes \alpha(s_k) \otimes t_it_k)\\
&= s_j\alpha(s_i) \otimes t_j \alpha(s_k) \otimes \alpha(t_it_k)\\
&= s_j\alpha(s_i) \otimes t_j \alpha(s_k) \otimes \alpha(t_i)\alpha(t_k).
\end{split}
\end{equation}
One observes that the last lines in \eqref{qlhs} and in \eqref{qlhs'} are different.  A similar computation for $R_{23}(R_{13}R_{12})$ and $(R_{23}R_{13})R_{12}$ shows that they are different as well.
\end{remark}

\begin{remark}
\label{rk:alphainv}
On the other hand, suppose that $R$ is $\alpha$-invariant, i.e., $\alpha^{\otimes 2}(R) = \sum \alpha(s_i) \otimes \alpha(t_i) = R$.  Then one can see from \eqref{qlhs} and \eqref{qlhs'} (and a similar computation for $R_{23}(R_{13}R_{12})$ and $(R_{23}R_{13})R_{12}$) that the two quantum Hom-Yang-Baxter equations \eqref{qhybe} and \eqref{qhybe'} coincide.
\end{remark}


We finish this section with some alternative characterizations of the axioms \eqref{R1323} and \eqref{R1312} of a quasi-triangular Hom-bialgebra.  They generalize an observation in \cite[p.812 (5)]{dri87}.  Let $(A,\mu,\Delta,\alpha,c)$ be a Hom-bialgebra with a weak unit $c$, $R \in A^{\otimes 2}$ be an arbitrary element, and $A^*$ be the linear dual of $A$.  Define four linear maps $\lambda_1, \lambda_1',\lambda_2,\lambda_2' \colon A^* \to A$ by setting
\begin{equation}
\label{lambdamaps}
\begin{split}
\lambda_1(\phi) &= \langle\phi \otimes \alpha,R\rangle, \quad
\lambda_1'(\phi) = \langle \alpha^*(\phi) \otimes Id,R\rangle,\\
\lambda_2(\phi) &= \langle \alpha \otimes \phi,R\rangle, \quad
\lambda_2'(\phi) = \langle Id \otimes \alpha^*(\phi),R\rangle
\end{split}
\end{equation}
for $\phi \in A^*$.

In the following characterizations of the axioms \eqref{R1323} and \eqref{R1312}, we use the operations $\Delta^* \colon A^* \otimes A^* \to A^*$ \eqref{deltadual} and $\mu^* \colon A^* \to A^* \otimes A^*$ \eqref{mudual} (when $A$ is finite dimensional) discussed in Example ~\ref{ex:duality} above.

\begin{theorem}
\label{thm:axiom1}
Let $(A,\mu,\Delta,\alpha,c)$ be a Hom-bialgebra with a weak unit $c$ and $R \in A^{\otimes 2}$ be an arbitrary element.  With the notations \eqref{lambdamaps}, the following statements are equivalent.
\begin{enumerate}
\item
The axiom \eqref{R1323} holds, i.e., $(\Delta \otimes \alpha)(R) = R_{13}R_{23}$.
\item
The diagram
\begin{equation}
\label{axiom1'}
\SelectTips{cm}{10}
\xymatrix{
A^* \otimes A^* \ar[rr]^-{\lambda_1' \otimes \lambda_1'} \ar[d]_-{\Delta^*} & & A \otimes A \ar[d]^-{\mu}\\
A^* \ar[rr]^-{\lambda_1} & & A
}
\end{equation}
is commutative.
\end{enumerate}
If $A$ is finite dimensional, then the two statements above are also equivalent to the commutativity of the diagram
\begin{equation}
\label{axiom1''}
\SelectTips{cm}{10}
\xymatrix{
A^* \ar[rr]^-{\lambda_2'} \ar[d]_-{\mu^*} & & A \ar[d]^-{\Delta}\\
A^* \otimes A^* \ar[rr]^-{\lambda_2 \otimes \lambda_2} & & A \otimes A.
}
\end{equation}
\end{theorem}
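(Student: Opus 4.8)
The theorem asserts the equivalence of three statements about a Hom-bialgebra with weak unit. Let me think about what needs to be proven and sketch a proof strategy.

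The key is understanding the dual maps Δ* and μ*, and how the λ maps interact with them.

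Let me work out what each side of axiom (R1323) means in terms of these maps, and how pairing with elements of A* translates the tensor equation into a statement about the λ maps.The plan is to translate the tensor identity \eqref{R1323} into a scalar identity by pairing both sides against a pair of functionals $\phi,\psi \in A^*$ in the first two tensor slots, and to recognize the resulting expression as exactly the statement that diagram \eqref{axiom1'} commutes. Write $R = \sum s_i \otimes t_i$. For the equivalence of (1) and (2), I would pair $\phi \otimes \psi \otimes Id$ against both sides of \eqref{R1323}. On the left, using $(\Delta \otimes \alpha)(R) = \sum \Delta(s_i) \otimes \alpha(t_i)$ and the definition of $\Delta^*$ from \eqref{deltadual}, the pairing yields $\sum \langle \phi \otimes \psi, \Delta(s_i)\rangle\, \alpha(t_i) = \sum \langle \Delta^*(\phi,\psi), s_i\rangle\, \alpha(t_i)$, which is precisely $\lambda_1(\Delta^*(\phi,\psi))$, the bottom-then-left route of the diagram. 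On the right, $R_{13}R_{23} = \sum \alpha(s_i)\alpha(s_j) \otimes \alpha(t_j) \otimes t_it_j$ (from the restatement \eqref{R1323'}), and pairing the first two slots against $\phi \otimes \psi$ gives $\sum \langle \phi, \alpha(s_i)\rangle \langle \psi, \alpha(s_j)\rangle\, t_it_j$. Using $\alpha^*(\phi) = \phi \circ \alpha$, this is $\sum \langle \alpha^*(\phi), s_i\rangle \langle \alpha^*(\psi), s_j\rangle\, t_i t_j = \mu\big(\lambda_1'(\phi), \lambda_1'(\psi)\big)$, the top-then-right route. Since the pairing against all $\phi,\psi$ separates elements of $A^{\otimes 3}$ only after noting that the third tensor slot is untouched, equality of the two tensor-valued expressions in the third slot for every $\phi,\psi$ is equivalent to the original identity \eqref{R1323}; that gives (1) $\Leftrightarrow$ (2).

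For the finite-dimensional equivalence with diagram \eqref{axiom1''}, I would instead pair against the \emph{last} tensor slot, using a single functional $\phi$ in slot three. On the left, pairing $Id \otimes Id \otimes \phi$ against $(\Delta \otimes \alpha)(R) = \sum \Delta(s_i) \otimes \alpha(t_i)$ produces $\sum \langle \phi, \alpha(t_i)\rangle\, \Delta(s_i) = \sum \langle \alpha^*(\phi), t_i\rangle\, \Delta(s_i) = \Delta\big(\lambda_2'(\phi)\big)$, the route down-the-left-then-$\Delta$ of \eqref{axiom1''}. On the right, pairing the same functional against $R_{13}R_{23} = \sum \alpha(s_i)\alpha(s_j) \otimes \alpha(t_j) \otimes t_it_j$ gives $\sum \langle \phi, t_it_j\rangle\, \alpha(s_i)\alpha(s_j)$; here I would invoke the definition of $\mu^*$ from \eqref{mudual}, writing $\mu^*(\phi) = \sum_{(\phi)} \phi' \otimes \phi''$ so that $\langle \phi, t_it_j\rangle = \sum_{(\phi)} \langle \phi', t_i\rangle\langle \phi'', t_j\rangle$, which rearranges to $\sum_{(\phi)} \lambda_2(\phi')\otimes \lambda_2(\phi'')$ composed with $\mu$—this is exactly $(\lambda_2 \otimes \lambda_2)\circ \mu^*(\phi)$ followed by the multiplication, i.e.\ the route across-the-bottom of \eqref{axiom1''}.

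The main obstacle, and the only place finite-dimensionality is genuinely needed, is the last step: for a general element of $A^{\otimes 3}$ (or $A^{\otimes 2}$), the pairing against functionals in one slot separates points only when that slot is finite dimensional, since one must use the canonical isomorphism $A^* \otimes A \cong \Hom(A,A)$ and the faithfulness of the evaluation pairing. For the first equivalence (1)$\Leftrightarrow$(2) the separation happens in the first two slots via $\Delta^*$ and $\mu$, and this works without a dimension hypothesis because $\Delta^*$ and the $\lambda_1,\lambda_1'$ maps are defined without dualizing $\mu$; but for \eqref{axiom1''} the appearance of $\mu^*$ forces $A$ to be finite dimensional, matching the convention recorded after \eqref{mudual}. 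I expect the bookkeeping of which functional lands in which slot, and keeping the untouched slot's tensor-valued output consistent on both sides, to be the most error-prone part of the verification, but it is otherwise a direct unwinding of the definitions \eqref{deltadual}, \eqref{mudual}, and \eqref{lambdamaps} together with the weak-unit identity $\alpha(x) = cx = xc$ used implicitly in passing between \eqref{R1323} and its restatement \eqref{R1323'}.
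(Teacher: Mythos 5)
Your overall strategy is the same as the paper's: pair \eqref{R1323} against $\phi\otimes\psi\otimes Id$ for the first equivalence and against $Id\otimes Id\otimes\phi$ for the second, then identify the two resulting expressions with the two routes around each square. The first equivalence is carried out correctly. But two things in the rest need fixing.

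First, your displayed formula for $R_{13}R_{23}$ is wrong: you write $\sum \alpha(s_i)\alpha(s_j)\otimes\alpha(t_j)\otimes t_it_j$, whereas (as in \eqref{R1323'}) the correct expression is $\sum \alpha(s_i)\otimes\alpha(s_j)\otimes t_it_j$ --- the first two slots are \emph{not} multiplied together and no $\alpha(t_j)$ appears. In the first equivalence you silently revert to the correct formula, so no harm is done there; in the second equivalence the error propagates. Pairing $Id\otimes Id\otimes\phi$ against the correct $R_{13}R_{23}$ yields $\sum \alpha(s_i)\langle\phi_1,t_i\rangle\otimes\alpha(s_j)\langle\phi_2,t_j\rangle=(\lambda_2\otimes\lambda_2)(\mu^*(\phi))$, an element of $A\otimes A$, to be compared directly with $\Delta(\lambda_2'(\phi))$. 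Your claim that the bottom route of \eqref{axiom1''} is ``$(\lambda_2\otimes\lambda_2)\circ\mu^*$ followed by the multiplication'' is not what that diagram says, and it does not even typecheck against the top route $\Delta\circ\lambda_2'$, which lands in $A\otimes A$. Deleting the spurious final $\mu$ (and correcting $R_{13}R_{23}$) repairs the argument.

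Second, your account of where finite-dimensionality enters is off. Over the characteristic-zero field tacitly in force whenever duals appear, linear functionals separate points of any vector space, so passing from ``equality after pairing with every $\phi$'' back to equality in $A^{\otimes 3}$ requires no dimension hypothesis in either equivalence. Finite-dimensionality is needed only so that $\mu^*$ is well-defined as a map $A^*\to A^*\otimes A^*$ rather than merely into $(A\otimes A)^*$, exactly as recorded after \eqref{mudual}; this is also the only use the paper makes of the hypothesis.
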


\begin{proof}
First we show the equivalence between the axiom ~\eqref{R1323} and the commutativity of the square \eqref{axiom1'}. Write $R = \sum s_i \otimes t_i$.  Axiom \eqref{R1323} holds if and only if
\begin{equation}
\label{ax1}
\langle \phi \otimes \psi \otimes Id, (\Delta \otimes \alpha)(R)\rangle =
\langle \phi \otimes \psi \otimes Id, R_{13}R_{23}\rangle
\end{equation}
for all $\phi, \psi \in A^*$.  The left-hand side in \eqref{ax1} is:
\begin{subequations}
\allowdisplaybreaks
\begin{align*}
\langle \phi \otimes \psi \otimes Id, (\Delta \otimes \alpha)(R)\rangle
&= \langle \phi \otimes \psi, \Delta(s_i)\rangle\alpha(t_i)\\
&= \langle \Delta^*(\phi,\psi),s_i\rangle\alpha(t_i)\\
&= \langle \Delta^*(\phi,\psi) \otimes \alpha, s_i \otimes t_i\rangle\\
&= \lambda_1(\Delta^*(\phi,\psi)).
\end{align*}
\end{subequations}
On the other hand, we have $R_{13}R_{23} = \sum \alpha(s_i) \otimes \alpha(s_j) \otimes t_it_j$ \eqref{R1323'}.  So the right-hand side in \eqref{ax1} is:
\begin{subequations}
\allowdisplaybreaks
\begin{align*}
\langle \phi \otimes \psi \otimes Id, R_{13}R_{23}\rangle
&= \langle \phi \otimes \psi \otimes Id,  \alpha(s_i) \otimes \alpha(s_j) \otimes t_it_j\rangle\\
&= \langle \phi, \alpha(s_i)\rangle\langle \psi,\alpha(s_j)\rangle t_it_j\\
&= \left(\langle \alpha^*(\phi),s_i\rangle t_i\right)\left(\langle \alpha^*(\psi),s_j\rangle t_j\right)\\
&= \left(\langle \alpha^*(\phi) \otimes Id, s_i \otimes t_i \rangle\right)\left(\langle \alpha^*(\psi) \otimes Id, s_j \otimes t_j\rangle\right)\\
&= \mu(\lambda_1'(\phi),\lambda_1'(\psi)).
\end{align*}
\end{subequations}
Therefore, the condition \eqref{ax1} holds for all $\phi,\psi \in A^*$ if and only if the square \eqref{axiom1'} is commutative.

Next we show the equivalence between the axiom ~\eqref{R1323} and the commutativity of the square \eqref{axiom1''} when $A$ is finite dimensional.  The finite dimensionality of $A$ ensures that $\mu^*$ is well-defined.  Axiom \eqref{R1323} holds if and only if
\begin{equation}
\label{ax1'}
\langle Id \otimes Id \otimes \phi, (\Delta \otimes \alpha)(R)\rangle =
\langle Id \otimes Id \otimes \phi, R_{13}R_{23}\rangle
\end{equation}
for all $\phi \in A^*$.  The left-hand side in \eqref{ax1'} is:
\begin{subequations}
\allowdisplaybreaks
\begin{align*}
\langle Id \otimes Id \otimes \phi, (\Delta \otimes \alpha)(R)\rangle &= \Delta(s_i)\langle \phi, \alpha(t_i)\rangle\\
&= \Delta(s_i\langle \alpha^*(\phi),t_i\rangle)\\
&= \Delta(\langle Id \otimes \alpha^*(\phi),s_i \otimes t_i\rangle\\
&= \Delta(\lambda_2'(\phi)).
\end{align*}
\end{subequations}
Below we write $\mu^*(\phi) = \sum \phi_1 \otimes \phi_2$.  The right-hand side in \eqref{ax1'} is:
\begin{subequations}
\allowdisplaybreaks
\begin{align*}
\langle Id \otimes Id \otimes \phi, R_{13}R_{23}\rangle
&= \langle Id \otimes Id \otimes \phi, \alpha(s_i) \otimes \alpha(s_j) \otimes t_it_j\rangle\\
&= \alpha(s_i) \otimes \alpha(s_j) \langle \mu^*(\phi), t_i \otimes t_j\rangle\\
&= \alpha(s_i)\langle \phi_1,t_i\rangle \otimes \alpha(s_j)\langle \phi_2, t_j\rangle\\
&= \langle \alpha \otimes \phi_1, s_i \otimes t_i\rangle \otimes \langle \alpha \otimes \phi_2, s_j \otimes t_j\rangle\\
&= \lambda_2^{\otimes 2}(\mu^*(\phi)).
\end{align*}
\end{subequations}
Therefore, the condition \eqref{ax1'} holds for all $\phi \in A^*$ if and only if the square \eqref{axiom1''} is commutative.
\end{proof}

The following result is the analogue of Theorem ~\ref{thm:axiom1} for the axiom \eqref{R1312}.

\begin{theorem}
\label{thm:axiom2}
Let $(A,\mu,\Delta,\alpha,c)$ be a Hom-bialgebra with a weak unit $c$ and $R \in A^{\otimes 2}$ be an arbitrary element.  With the notations \eqref{lambdamaps}, the following statements are equivalent.
\begin{enumerate}
\item
The axiom \eqref{R1312} holds, i.e., $(\alpha \otimes \Delta)(R) = R_{13}R_{12}$.
\item
The diagram
\begin{equation}
\label{axiom2'}
\SelectTips{cm}{10}
\xymatrix{
A^* \otimes A^* \ar[rr]^-{\lambda_2' \otimes \lambda_2'} \ar[d]_-{\Delta^*} & & A \otimes A \ar[d]^-{\mu^{op}}\\
A^* \ar[rr]^-{\lambda_2} & & A
}
\end{equation}
is commutative, where $\mu^{op} = \mu \circ \tau$.
\end{enumerate}
If $A$ is finite dimensional, then the two statements above are also equivalent to the commutativity of the diagram
\begin{equation}
\label{axiom2''}
\SelectTips{cm}{10}
\xymatrix{
A^* \ar[rr]^-{\lambda_1'} \ar[d]_-{\mu^{*op}} & & A \ar[d]^-{\Delta}\\
A^* \otimes A^* \ar[rr]^-{\lambda_1 \otimes \lambda_1} & & A \otimes A,
}
\end{equation}
where $\mu^{*op} = \tau \circ \mu^*$.
\end{theorem}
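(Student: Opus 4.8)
The plan is to mimic the proof of Theorem~\ref{thm:axiom1} almost line for line, the only change being a cyclic permutation of the three tensor slots against which we pair functionals, together with a transposition that accounts for the reversed order of the product in $R_{13}R_{12}$ as compared with $R_{13}R_{23}$. Throughout I write $R = \sum s_i \otimes t_i$ and recall from \eqref{R1312'} that $R_{13}R_{12} = \sum s_j s_i \otimes \alpha(t_i) \otimes \alpha(t_j)$. The key observation is that, whereas in \eqref{R1323'} the factors $t_i t_j$ of $R_{13}R_{23}$ occur in the \emph{same} order as the functionals applied to the first two slots, here the factors $s_j s_i$ of $R_{13}R_{12}$ occur in the \emph{opposite} order; this is precisely what forces $\mu^{op}$ and $\mu^{*op}$ into the diagrams \eqref{axiom2'} and \eqref{axiom2''}.

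For the equivalence of \eqref{R1312} with the commutativity of \eqref{axiom2'}, I would pair both sides of \eqref{R1312} against $Id \otimes \phi \otimes \psi$ for arbitrary $\phi,\psi \in A^*$ (in Theorem~\ref{thm:axiom1} one instead pairs against $\phi \otimes \psi \otimes Id$). Applying $Id$ to the first slot of $(\alpha \otimes \Delta)(R) = \sum \alpha(s_i) \otimes \Delta(t_i)$ and $\phi \otimes \psi$ to the remaining two, the left-hand side collapses to $\sum \alpha(s_i)\langle \Delta^*(\phi,\psi), t_i\rangle = \lambda_2(\Delta^*(\phi,\psi))$ by \eqref{deltadual} and the definition of $\lambda_2$ in \eqref{lambdamaps}. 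Pairing $Id \otimes \phi \otimes \psi$ against $R_{13}R_{12} = \sum s_j s_i \otimes \alpha(t_i) \otimes \alpha(t_j)$ yields $\sum s_j s_i \langle \phi, \alpha(t_i)\rangle\langle \psi, \alpha(t_j)\rangle$, which is exactly $\lambda_2'(\psi)\lambda_2'(\phi) = \mu^{op}(\lambda_2'(\phi) \otimes \lambda_2'(\psi))$. These two expressions agree for all $\phi,\psi$ if and only if \eqref{R1312} holds, giving the equivalence.

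For the finite-dimensional statement I would instead pair both sides of \eqref{R1312} against $\phi \otimes Id \otimes Id$ (Theorem~\ref{thm:axiom1} pairs against $Id \otimes Id \otimes \phi$). The left-hand side becomes $\sum \langle \phi, \alpha(s_i)\rangle \Delta(t_i) = \Delta(\lambda_1'(\phi))$ by linearity of $\Delta$ and the definition of $\lambda_1'$. On the right, pairing $\phi$ against the first slot of $R_{13}R_{12}$ gives $\sum \langle \phi, s_j s_i\rangle\, \alpha(t_i) \otimes \alpha(t_j)$; expanding $\langle \phi, s_j s_i\rangle = \langle \mu^*(\phi), s_j \otimes s_i\rangle = \sum \langle \phi_1, s_j\rangle\langle \phi_2, s_i\rangle$ via \eqref{mudual} and writing $\mu^*(\phi) = \sum \phi_1 \otimes \phi_2$, this becomes $\sum \langle \phi_2, s_i\rangle \alpha(t_i) \otimes \langle \phi_1, s_j\rangle \alpha(t_j) = (\lambda_1 \otimes \lambda_1)(\mu^{*op}(\phi))$, where the swap $\phi_1 \leftrightarrow \phi_2$ is exactly recorded by $\mu^{*op} = \tau \circ \mu^*$. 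Hence \eqref{R1312} is equivalent to the commutativity of \eqref{axiom2''}.

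These computations are routine once the correct slots and orderings are fixed, so the only real obstacle is the bookkeeping: one must track carefully which functional lands in which tensor factor and, crucially, recognize that the reversed product order $s_j s_i$ coming from $R_{13}R_{12}$ (rather than $t_i t_j$ from $R_{13}R_{23}$) is precisely what converts $\mu$ into $\mu^{op}$ and $\mu^*$ into $\mu^{*op}$ relative to Theorem~\ref{thm:axiom1}. Getting these transpositions right is the entire content of the argument.
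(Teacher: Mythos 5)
Your proposal is correct and follows exactly the route the paper intends: the paper's own proof is only a sketch that says to pair against $Id \otimes \phi \otimes \psi$ for the equivalence with \eqref{axiom2'} and against $\phi \otimes Id \otimes Id$ for the equivalence with \eqref{axiom2''}, and your computations carry out precisely those pairings, correctly identifying the reversed order $s_js_i$ in $R_{13}R_{12}$ as the source of $\mu^{op}$ and $\mu^{*op}$.
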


\begin{proof}
This proof is completely analogous to that of Theorem ~\ref{thm:axiom1}, so we only give a sketch. The axiom \eqref{R1312} is equivalent to the equality
\[
\langle Id \otimes \phi \otimes \psi, (\alpha \otimes \Delta)(R)\rangle = \langle Id \otimes \phi \otimes \psi, R_{13}R_{12}\rangle
\]
for all $\phi, \psi \in A^*$.  Some calculation shows that this equality is in turn equivalent to the commutativity of the square \eqref{axiom2'}.  To show the equivalence between \eqref{R1312} and the commutativity of the square \eqref{axiom2''}, one uses $\phi \otimes Id \otimes Id$ instead of $Id \otimes \phi \otimes \psi$.
\end{proof}

In the special case $\alpha = Id$, we have $\lambda_1 = \lambda_1'$ and $\lambda_2 = \lambda_2'$.  So in this case, the commutative diagrams \eqref{axiom1'}, \eqref{axiom1''}, \eqref{axiom2'}, and \eqref{axiom2''} mean, respectively, that $\lambda_1$ is an algebra morphism, that $\lambda_2$ is a coalgebra morphism, that $\lambda_2$ is an algebra anti-morphism, and that $\lambda_1$ is a coalgebra anti-morphism.

\section{Examples of quasi-triangular Hom-bialgebras}
\label{sec:example}

Before we discuss the relationships between the QHYBE (\eqref{qhybe} and \eqref{qhybe'}) and the Hom version of the YBE, in this section we describe several classes of quasi-triangular Hom-bialgebras (Examples ~\ref{ex:anyon} - ~\ref{ex:sl2}).


We begin with some general twisting procedures by which quasi-triangular Hom-bialgebras can be constructed (Theorem ~\ref{thm:twist1} - Corollary ~\ref{cor:twist}).  The first twisting procedure, applied to the (co)multiplication, produces a family of quasi-triangular Hom-bialgebras from any given quasi-triangular bialgebra.  Recall the definitions of a quasi-triangular Hom-bialgebra (Definition ~\ref{def:qthb}) and of a quasi-triangular bialgebra (in the paragraph containing \eqref{qybe}).

\begin{theorem}
\label{thm:twist1}
Let $(A,\mu,\Delta,R)$ be a quasi-triangular bialgebra and $\alpha \colon A \to A$ be a bialgebra morphism (not-necessarily preserving $1$).  Then
\[
A_\alpha = (A,\mu_\alpha,\Delta_\alpha,\alpha,1,R)
\]
is a quasi-triangular Hom-bialgebra, in which $\mu_\alpha = \alpha \circ \mu$ and $\Delta_\alpha = \Delta \circ \alpha$.
\end{theorem}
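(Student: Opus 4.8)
The plan is to build on the structure already supplied by the earlier examples and then verify the three quasi-triangular axioms of Definition~\ref{def:qthb} by reducing each one to the corresponding classical axiom \eqref{qtb} for $(A,\mu,\Delta,R)$. The single structural observation that drives everything is this: since the multiplication on each tensor factor of $A_\alpha^{\otimes k}$ is $\mu_\alpha=\alpha\circ\mu$, the componentwise product of two tensors $u,v\in A^{\otimes k}$ computed in $A_\alpha$ equals $\alpha^{\otimes k}$ applied to their componentwise product computed in $A$. Writing $R_{ij}$ for the elements of \eqref{R123} built from the weak unit $c=1$ and $\mu_\alpha$, and $R_{ij}^{\mathrm{cl}}$ for Drinfel'd's classical elements built from $1$ and $\mu$, this gives at once $R_{13}R_{23}=\alpha^{\otimes3}(R_{13}^{\mathrm{cl}}R_{23}^{\mathrm{cl}})$ and $R_{13}R_{12}=\alpha^{\otimes3}(R_{13}^{\mathrm{cl}}R_{12}^{\mathrm{cl}})$. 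Note that this observation, and the weak unit $c=1$ itself (Example~\ref{ex:weakunit}), only use that $1$ is the unit for $\mu$, so nothing here requires $\alpha(1)=1$.

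First I would record the structural facts for free: by Example~\ref{ex:homas}(3), $(A,\mu_\alpha,\Delta_\alpha,\alpha)$ is a Hom-bialgebra, and since $A$ is unital associative, Example~\ref{ex:weakunit} shows that $c=1$ is a weak unit of $(A,\mu_\alpha,\alpha)$. It then remains only to check the axioms \eqref{R1323}, \eqref{R1312}, and \eqref{RDelta}.

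For \eqref{R1323}, I would compute the left-hand side using comultiplicativity of $\alpha$, namely $\Delta\circ\alpha=\alpha^{\otimes2}\circ\Delta$, to obtain $(\Delta_\alpha\otimes\alpha)(R)=\alpha^{\otimes3}\big((\Delta\otimes Id)(R)\big)$; by the structural observation the right-hand side is $\alpha^{\otimes3}(R_{13}^{\mathrm{cl}}R_{23}^{\mathrm{cl}})$. The classical axiom $(\Delta\otimes Id)(R)=R_{13}^{\mathrm{cl}}R_{23}^{\mathrm{cl}}$ equates the two after applying $\alpha^{\otimes3}$. The axiom \eqref{R1312} follows by the symmetric computation, using comultiplicativity on the second factor together with the classical identity $(Id\otimes\Delta)(R)=R_{13}^{\mathrm{cl}}R_{12}^{\mathrm{cl}}$.

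The axiom \eqref{RDelta} is the one needing the most care, and I expect it to be the main obstacle, since it involves a variable element $x$ and the interaction of $\alpha$ with $\Delta^{op}$. Here I would note that $(\tau\circ\Delta_\alpha)(x)=\Delta^{op}(\alpha(x))$ and $\Delta_\alpha(x)=\Delta(\alpha(x))$, and that forming the product with $R$ in $A_\alpha$ introduces one outer $\alpha$ in each tensor slot, so that both sides of \eqref{RDelta} equal $\alpha^{\otimes2}$ applied to the respective sides of the classical identity $\Delta^{op}(y)R=R\Delta(y)$ evaluated at $y=\alpha(x)$. Thus \eqref{RDelta} is precisely $\alpha^{\otimes2}$ of the classical axiom at $\alpha(x)$, and invoking that axiom for the element $\alpha(x)\in A$ completes the argument. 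The only genuine bookkeeping is keeping track of which power of $\alpha$ lands on each tensor slot and remembering that the classical relation must be applied at $\alpha(x)$ rather than at $x$.
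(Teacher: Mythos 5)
Your proposal is correct and follows essentially the same route as the paper: both reduce each of the three axioms \eqref{qtaxioms} to the corresponding classical axiom \eqref{qtb} by observing that componentwise products in $A_\alpha^{\otimes k}$ are $\alpha^{\otimes k}$ applied to products in $A$, and both apply the classical relation $\Delta^{op}(y)R=R\Delta(y)$ at $y=\alpha(x)$ for the third axiom. Your packaging of the key step as a single structural observation about $\alpha^{\otimes k}$ is just a cleaner phrasing of the element-by-element computation the paper carries out.
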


\begin{proof}
As noted in Examples ~\ref{ex:homas} and ~\ref{ex:weakunit}, $(A,\mu_\alpha,\Delta_\alpha,\alpha,1)$ is a Hom-bialgebra with a weak unit $c = 1$.  It remains to verify the three axioms \eqref{qtaxioms} for $A_\alpha$.

For \eqref{R1323} note that, since $R = \sum s_i \otimes t_i$ is a quasi-triangular structure \eqref{qtb}, we have
\[
(\Delta \otimes Id)(R) = R_{13}R_{23} = \sum s_i \otimes s_j \otimes t_it_j,
\]
where $t_it_j = \mu(t_i,t_j)$.  Also, we have $\Delta_\alpha = \alpha^{\otimes 2} \circ \Delta$ because $\alpha$ is a bialgebra morphism.  Therefore, we have
\begin{subequations}
\allowdisplaybreaks
\begin{align*}
(\Delta_\alpha \otimes \alpha)(R)
&= ((\alpha^{\otimes 2} \circ \Delta) \otimes \alpha)(R)\\
&= \alpha^{\otimes 3}(\Delta \otimes Id)(R)\\
&= \alpha^{\otimes 3} \left(s_i \otimes s_j \otimes t_it_j\right)\\
&= \alpha(s_i) \otimes \alpha(s_j) \otimes \mu_\alpha(t_i,t_j).
\end{align*}
\end{subequations}
This proves \eqref{R1323} (in the alternative formulation \eqref{R1323'}) for $A_\alpha$.  Similarly, for \eqref{R1312} we use
\[
(Id \otimes \Delta)(R) = R_{13}R_{12} = \sum s_js_i \otimes t_i \otimes t_j.
\]
This gives
\begin{subequations}
\allowdisplaybreaks
\begin{align*}
(\alpha \otimes \Delta_\alpha)(R)
&= (\alpha \otimes (\alpha^{\otimes 2} \circ \Delta))(R)\\
&= \alpha^{\otimes 3}\left(Id \otimes \Delta\right)(R)\\
&= \alpha^{\otimes 3}\left(s_js_i \otimes t_i \otimes t_j\right)\\
&= \mu_\alpha(s_j,s_i) \otimes \alpha(t_i) \otimes \alpha(t_j).
\end{align*}
\end{subequations}
This proves \eqref{R1312} (in the alternative formulation \eqref{R1312'}) for $A_\alpha$.

For \eqref{RDelta} note that we have $((\tau \circ \Delta)(y))R = R\Delta(y)$ \eqref{qtb} for $y \in A$, i.e.,
\[
\sum y_2s_i \otimes y_1t_i = \sum s_iy_1 \otimes t_iy_2,
\]
where $\Delta(y) = \sum y_1 \otimes y_2$.  We use it below when $y = \alpha(x)$ for $x \in A$.  We have
\begin{subequations}
\allowdisplaybreaks
\begin{align*}
\mu_\alpha((\tau \circ \Delta_\alpha)(x),R)
&= \mu_\alpha(\alpha(x)_2 \otimes \alpha(x)_1, s_i \otimes t_i)\\
&= \alpha\left(\alpha(x)_2s_i\right) \otimes \alpha\left(\alpha(x)_1t_i\right)\\
&= \alpha\left(s_i\alpha(x)_1\right) \otimes \alpha\left(t_i\alpha(x)_2\right)\\
&= \mu_\alpha(R, \Delta_\alpha(x)).
\end{align*}
\end{subequations}
This proves \eqref{RDelta} for $A_\alpha$.
\end{proof}

\begin{remark}
\label{rk:R}
In the proof of Theorem ~\ref{thm:twist1}, we did not use the invertibility of $R$.  So Theorem ~\ref{thm:twist1} is still true even if $R$ is not invertible.  The same goes for Corollary ~\ref{cor:twist} below.
\end{remark}


The second twisting procedure, applied to the element $R$, produces a family of quasi-triangular Hom-bialgebras from any given quasi-triangular Hom-bialgebra with a surjective twisting map.

\begin{theorem}
\label{thm:twist2}
Let $(A,\mu,\Delta,\alpha,c,R)$ be a quasi-triangular Hom-bialgebra with $\alpha$ surjective and $n$ be a positive integer.  Then
\[
\an = (A,\mu,\Delta,\alpha,c,\rn)
\]
is also a quasi-triangular Hom-bialgebra, where $\rn = (\alpha^n \otimes \alpha^n)(R)$.
\end{theorem}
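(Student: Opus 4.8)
The plan is to observe that $\an$ and the original quasi-triangular Hom-bialgebra share the same underlying Hom-bialgebra $(A,\mu,\Delta,\alpha)$ and the same weak unit $c$. Hence only the three axioms \eqref{R1323}, \eqref{R1312}, and \eqref{RDelta} must be re-checked, now for the new element $\rn = (\alpha^n \otimes \alpha^n)(R)$. Throughout I will use the multiplicativity of $\alpha$ in the iterated form $\alpha^n(xy) = \alpha^n(x)\alpha^n(y)$ and its comultiplicativity in the form $(\alpha^n \otimes \alpha^n)\circ\Delta = \Delta\circ\alpha^n$, both obtained by iterating the Hom-bialgebra axioms. Writing $R = \sum s_i \otimes t_i$, so that $\rn = \sum \alpha^n(s_i)\otimes\alpha^n(t_i)$, I abbreviate $s_i' = \alpha^n(s_i)$ and $t_i' = \alpha^n(t_i)$.

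For the two coproduct axioms \eqref{R1323} and \eqref{R1312}, the key idea is that applying $(\alpha^n)^{\otimes 3}$ to the corresponding axiom for $R$ reproduces verbatim the axiom for $\rn$. Concretely, for \eqref{R1323} I would apply $(\alpha^n)^{\otimes 3}$ to the identity $(\Delta\otimes\alpha)(R) = \sum\alpha(s_i)\otimes\alpha(s_j)\otimes t_it_j$ of \eqref{R1323'}. On the left, pushing $\alpha^n$ through $\Delta$ by comultiplicativity turns $(\alpha^n)^{\otimes 3}(\Delta\otimes\alpha)(R)$ into $(\Delta\otimes\alpha)(\rn)$; on the right, pushing $\alpha^n$ through the products by multiplicativity and commuting $\alpha^n$ with $\alpha$ turns it into $\sum \alpha(s_i')\otimes\alpha(s_j')\otimes t_i't_j'$, which is exactly $\rn_{13}\rn_{23}$. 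The verification of \eqref{R1312} is identical, starting from \eqref{R1312'}.

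The axiom \eqref{RDelta} is the only place requiring the surjectivity hypothesis, and I expect it to be the main point of the argument. Here the element $x$ enters untwisted, so I would apply $(\alpha^n)^{\otimes 2}$ to the original identity $\sum y_2 s_i \otimes y_1 t_i = \sum s_i y_1 \otimes t_i y_2$ for an arbitrary $y$; multiplicativity then yields $\sum \alpha^n(y_2) s_i' \otimes \alpha^n(y_1) t_i' = \sum s_i' \alpha^n(y_1) \otimes t_i' \alpha^n(y_2)$. By comultiplicativity, $\alpha^n(y_1)\otimes\alpha^n(y_2) = \Delta(\alpha^n(y))$, so with $x = \alpha^n(y)$ this is precisely the \eqref{RDelta} axiom for $\rn$ evaluated at $x$. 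Since $\alpha$ is surjective, so is $\alpha^n$, and therefore every $x \in A$ arises as $\alpha^n(y)$ for some $y$; this is exactly the step that upgrades the identity from elements in the image of $\alpha^n$ to all of $A$. Once these three verifications are in place, $\an$ meets every requirement of Definition~\ref{def:qthb} and the proof is complete.
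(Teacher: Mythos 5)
Your proposal is correct and follows essentially the same route as the paper: apply powers of $\alpha$ to the three axioms for $R$, push them through using (co)multiplicativity and the restated forms \eqref{R1323'}--\eqref{R1312'}, and invoke surjectivity of $\alpha^n$ only for \eqref{RDelta} by writing $x=\alpha^n(y)$. The only (cosmetic) difference is that the paper reduces to $n=1$ by induction while you handle general $n$ directly with iterated (co)multiplicativity.
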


\begin{proof}
By induction it suffices to prove the case $n=1$.  We need to check the three axioms ~\eqref{qtaxioms} for $\ralpha = (\alpha \otimes \alpha)(R) = \sum \alpha(s_i) \otimes \alpha(t_i)$, where $R = \sum s_i \otimes t_i$.  Using the assumption that $\alpha$ is (co)multiplicative and that $c$ is a weak unit, we compute as follows:
\begin{subequations}
\allowdisplaybreaks
\begin{align*}
(\Delta \otimes \alpha)(\ralpha)
&= (\Delta \otimes \alpha)(\alpha \otimes \alpha)(R)\\
&= \alpha^{\otimes 3}((\Delta \otimes \alpha)(R))\\
&= \alpha^{\otimes 3}(\alpha(s_i) \otimes \alpha(s_j) \otimes t_it_j) \quad\text{by \eqref{R1323'}}\\
&= \alpha^2(s_i) \otimes \alpha^2(s_j) \otimes \alpha(t_it_j)\\
&= \alpha(s_i)c \otimes c \alpha(s_j) \otimes \alpha(t_i)\alpha(t_j)\\
&= (\alpha(s_i) \otimes c \otimes \alpha(t_i))(c \otimes \alpha(s_j) \otimes \alpha(t_j))\\
&= \ralpha_{13}\ralpha_{23}.
\end{align*}
\end{subequations}
This proves \eqref{R1323} for $\ralpha$.  The axiom \eqref{R1312} for $\ralpha$ is proved similarly.  Notice that we have not used the surjectivity assumption of $\alpha$ so far.

To prove \eqref{RDelta} for $\ralpha$, pick an element $x \in A$.  Since $\alpha$ is assumed to be surjective, we have $x = \alpha(y)$ for some (not-necessarily unique) element $y \in A$.  By the comultiplicativity of $\alpha$, we have
\[
\Delta(x) = \sum x_1 \otimes x_2 = \sum \alpha(y)_1 \otimes \alpha(y)_2 = \sum \alpha(y_1) \otimes \alpha(y_2) = \alpha^{\otimes 2}(\Delta(y)).
\]
Using the multiplicativity of $\alpha$, we compute as follows:
\begin{subequations}
\allowdisplaybreaks
\begin{align*}
[(\tau \circ \Delta)(x)]\ralpha
&= \alpha(y_2)\alpha(s_i) \otimes \alpha(y_1)\alpha(t_i)\\
&= \alpha^{\otimes 2}(((\tau\circ\Delta)(y))R)\\
&= \alpha^{\otimes 2}(R\Delta(y))\quad\text{by \eqref{RDelta}}\\
&= \alpha(s_i)\alpha(y_1) \otimes \alpha(t_i)\alpha(y_2)\\
&= \alpha(s_i)x_1 \otimes \alpha(t_i)x_2\\
&= \ralpha\Delta(x).
\end{align*}
\end{subequations}
This proves \eqref{RDelta} for $\ralpha$.
\end{proof}

The following result is an immediate consequence of Theorems ~\ref{thm:twist1} and ~\ref{thm:twist2}.

\begin{corollary}
\label{cor:twist}
Let $(A,\mu,\Delta,R)$ be a quasi-triangular bialgebra, $\alpha \colon A \to A$ be a surjective bialgebra morphism (not-necessarily preserving $1$), and $n$ be a positive integer.  Then
\[
A_\alpha^{(n)} = (A,\mu_\alpha,\Delta_\alpha,\alpha,1,\rn)
\]
is a quasi-triangular Hom-bialgebra, where $\mu_\alpha = \alpha \circ \mu$, $\Delta_\alpha = \Delta \circ \alpha$, and $\rn = (\alpha^n \otimes \alpha^n)(R)$.
\end{corollary}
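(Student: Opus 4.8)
The plan is to obtain $A_\alpha^{(n)}$ by composing the two twisting procedures established in Theorems \ref{thm:twist1} and \ref{thm:twist2}, with essentially no further computation required. The corollary is labelled as an immediate consequence of these two theorems, so the entire task is to chain them correctly and to track the hypotheses.

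First I would apply Theorem \ref{thm:twist1} to the quasi-triangular bialgebra $(A,\mu,\Delta,R)$ together with the bialgebra morphism $\alpha$. Since $\alpha$ is a bialgebra morphism, and Theorem \ref{thm:twist1} does not require $\alpha$ to be surjective, this immediately yields that
\[
A_\alpha = (A,\mu_\alpha,\Delta_\alpha,\alpha,1,R)
\]
is a quasi-triangular Hom-bialgebra, where $\mu_\alpha = \alpha \circ \mu$ and $\Delta_\alpha = \Delta \circ \alpha$.

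Next I would apply Theorem \ref{thm:twist2} to this quasi-triangular Hom-bialgebra $A_\alpha$. The key observation is that the twisting map of $A_\alpha$ is still the original $\alpha$, which is surjective by hypothesis. Thus Theorem \ref{thm:twist2} applies and, for the given positive integer $n$, produces the quasi-triangular Hom-bialgebra
\[
(A_\alpha)^{(n)} = (A,\mu_\alpha,\Delta_\alpha,\alpha,1,\rn),
\]
where $\rn = (\alpha^n \otimes \alpha^n)(R)$. This tuple is exactly $A_\alpha^{(n)}$ as stated, so the argument is complete.

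The only point demanding any care --- and the nearest thing to an obstacle --- is the bookkeeping of hypotheses and notations across the two theorems. One must check that twisting the (co)multiplication in the first step leaves the twisting map unchanged (it remains $\alpha$), so that the surjectivity required in the second step is inherited for free, and that the element $\rn$ appearing in the conclusion of Theorem \ref{thm:twist2} coincides with the $\rn = (\alpha^n \otimes \alpha^n)(R)$ in the statement of the corollary. Both facts are immediate from the definition of $A_\alpha$ and of the twisting $R \mapsto \rn$, so no genuine difficulty arises.
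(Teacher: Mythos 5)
Your proposal is correct and matches the paper exactly: the paper states the corollary as an immediate consequence of Theorems \ref{thm:twist1} and \ref{thm:twist2}, chained in precisely the order you describe. Your added remark that the twisting map of $A_\alpha$ is still the surjective $\alpha$, so the hypothesis of Theorem \ref{thm:twist2} is inherited, is the right (and only) point of bookkeeping to check.
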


We now give a series of examples of quasi-triangular Hom-bialgebras using Theorem ~\ref{thm:twist1} and Corollary ~\ref{cor:twist}.

\begin{example}[\textbf{Anyon-generating Hom-quantum groups}]
\label{ex:anyon}
Consider the group bialgebra $\bC \bZ/n$ over $\bC$ generated by the cyclic group $\bZ/n$ with generator $g$ and relation $g^n = 1$.  Its comultiplication is determined by
\begin{equation}
\label{kgdelta}
\Delta(g) = g \otimes g,
\end{equation}
which is cocommutative.  It becomes a quasi-triangular bialgebra when equipped with the non-trivial quasi-triangular structure
\[
R = \frac{1}{n} \sum_{p,q=0}^{n-1} \exp(-2\pi ipq/n) g^p \otimes g^q.
\]
The quasi-triangular bialgebra $(\bC\bZ/n,R)$ is called the \emph{anyon-generating quantum group} (\cite{majid92} and \cite[Example 2.1.6]{majid}) and is denoted by $\zn$.  We can get bialgebra morphisms on $\bC\bZ/n$ by extending the group morphisms
\[
\alpha_k \colon \bZ/n \to \bZ/n, \quad \alpha_k(g) = g^k
\]
for $k \in \left\{1, \ldots , n-1\right\}$.  Moreover, $\alpha_k$ is surjective if and only if $k$ and $n$ are relatively prime.

By Theorem ~\ref{thm:twist1}, for each $k \in \left\{1, \ldots , n-1\right\}$ we have a quasi-triangular Hom-bialgebra
\[
(\zn)_{\alpha_k} = (\bC\bZ/n,\mu_{\alpha_k},\Delta_{\alpha_k},\alpha_k,1,R)
\]
with twisted (co)multiplication.  Here $\mu_{\alpha_k} = \alpha_k\circ\mu$ (with $\mu$ the multiplication in $\bC\bZ/n$) and $\Delta_{\alpha_k}$ is determined by $\Delta_{\alpha_k}(g) = g^k \otimes g^k$.

Suppose that $k$ and $n$ are relatively prime, so $\alpha_k$ is surjective.  By Corollary ~\ref{cor:twist}, for each integer $t \geq 1$ we have a quasi-triangular Hom-bialgebra
\[
(\zn)_{\alpha_k}^{(t)} = (\bC\bZ/n,\mu_{\alpha_k},\Delta_{\alpha_k},\alpha_k,1,R^{\alpha^t}),
\]
where the twisted quasi-triangular structure is
\[
R^{\alpha^t} = (\alpha_k^t \otimes \alpha_k^t)(R) = \frac{1}{n} \sum_{p,q=0}^{n-1} \exp(-2\pi ipq/n) g^{pk^t} \otimes g^{qk^t}.
\]
\qed
\end{example}

\begin{example}[\textbf{Hom-quantum group bialgebras}]
\label{ex:groupbialg}
This is a generalization of the previous example.  Let $\bk$ be a characteristic $0$ field,  $G$ be a finite abelian group, written multiplicatively with identity $e$, and $\kg$ be its group bialgebra \cite[p.58, Example 2.4]{abe}.  Its comultiplication is determined by \eqref{kgdelta} for $g \in G$.  Since $G$ is commutative, $\kg$ is both commutative and cocommutative.

Identify $\kg \otimes \kg$ with $\bk H$, where $H = G \times G$.  A (not-necessarily invertible) quasi-triangular structure on the group bialgebra $\kg$ is equivalent to a function $R \colon G \times G \to \bk$ such that
\begin{equation}
\label{eq:groupbi}
\sum_{xy=v} R(u,x)R(w,y) = \delta_{u,w}R(u,v) \quad\text{and}\quad
\sum_{xy=u} R(x,v)R(y,w) = \delta_{v,w}R(u,v)
\end{equation}
for all $u,v,w \in G$ \cite[Example 2.1.17]{majid}, where $\delta_{u,w}$ denotes the Kronecker delta.  In fact, writing $R = \sum_{u,v\in G} R(u,v) u \otimes v$, the two conditions in \eqref{eq:groupbi} are equivalent to the axioms $(\Delta \otimes Id)(R) = R_{13}R_{23}$ and $(Id \otimes \Delta)(R) = R_{13}R_{12}$ \eqref{qtb}, respectively.  The condition $(\tau \circ \Delta(x))(R) = R\Delta(x)$ is automatic because $\kg$ is both commutative and cocommutative.  Fix such a quasi-triangular structure $R$ for the rest of this example.

If $\alpha \colon G \to G$ is any group morphism, then it extends naturally to a bialgebra morphism on $\kg$, where $\alpha\left(\sum_u c_uu\right) = \sum_u c_u\alpha(u)$.  By Theorem ~\ref{thm:twist1} (and Remark ~\ref{rk:R}) we have a quasi-triangular Hom-bialgebra
\[
\kg_\alpha = (\kg,\mu_\alpha = \alpha \circ \mu,\Delta_\alpha = \Delta \circ \alpha,\alpha,e,R),
\]
where $\mu$ and $\Delta$ are the multiplication and the comultiplication in $\kg$, respectively.

Suppose, in addition, that $\alpha \colon G \to G$ is a group automorphism.  Then by Corollary ~\ref{cor:twist} (and Remark ~\ref{rk:R}) we have a quasi-triangular Hom-bialgebra
\[
\kg_\alpha^{(n)} = (\kg,\mu_\alpha,\Delta_\alpha,\alpha,e,\rn)
\]
for each $n \geq 1$.   We can make the twisted quasi-triangular structure $\rn = (\alpha^n \otimes \alpha^n)(R)$ more explicit as follows.  Thinking of $R$ as $\sum_{u,v} R(u,v) u \otimes v$, we have
\[
\alpha^{\otimes 2}(R) = \sum_{u,v}R(u,v) \alpha(u) \otimes \alpha(v) = \sum_{u,v} R(\alpha^{-1}(u),\alpha^{-1}(v)) u \otimes v,
\]
since $\alpha$ is invertible.  Therefore, $\rn$ is equivalent to the function $G \times G \to \bk$ given by
\begin{equation}
\label{rnfunction}
\rn(u,v) = R\left((\alpha^{-1})^n(u), (\alpha^{-1})^n(v)\right)
\end{equation}
for $u,v \in G$.
\qed
\end{example}

\begin{example}[\textbf{Hom-quantum function bialgebras}]
\label{ex:function}
This example is closely related to the previous example.  Let $G$ be a finite abelian group, $\bk$ be a characteristic $0$ field, and $\kofg$ be the bialgebra of functions $G \to \bk$ \cite[p.58, Example 2.4]{abe}.  Its multiplication $\mu$ is defined point-wise, i.e.,
\[
\langle \mu(\phi,\psi), u\rangle = \langle \phi,u\rangle \langle \psi,u\rangle,
\]
for $u \in G$ and $\phi, \psi \in \kofg$.  Its comultiplication $\Delta$ is dual to the multiplication in $G$, i.e.,
\[
\langle \Delta(\phi), (u,v)\rangle = \langle \phi, uv\rangle
\]
for $\phi \in \kofg$ and $u,v \in G$.  Since $G$ is commutative, $\kofg$ is both commutative and cocommutative.

A (not-necessarily invertible) quasi-triangular structure $R$ on the function bialgebra $\kofg$ is equivalent to a function $R \colon G \times G \to \bk$ such that
\begin{equation}
\label{eq:functionbi}
R(uv,w) = R(u,w)R(v,w) \quad\text{and}\quad
R(u,vw) = R(u,w)R(u,v)
\end{equation}
for all $u,v,w \in G$ \cite[Example 2.1.18]{majid}.  In fact, the two conditions in \eqref{eq:functionbi} are equivalent to the axioms $(\Delta \otimes Id)(R) = R_{13}R_{23}$ and $(Id \otimes \Delta)(R) = R_{13}R_{12}$ \eqref{qtb}, respectively.  The condition $(\tau \circ \Delta(x))(R) = R\Delta(x)$ is automatic because $\kofg$ is both commutative and cocommutative.  Fix such a quasi-triangular structure $R$ for the rest of this example.

If $\alpha \colon G \to G$ is a group morphism, then it extends naturally to a bialgebra morphism $\alpha^* \colon \kofg \to \kofg$ given by $\alpha^*(\phi) = \phi \circ \alpha$.  By Theorem ~\ref{thm:twist1} (and Remark ~\ref{rk:R}) we have a quasi-triangular Hom-bialgebra
\[
\kofg_{\alpha^*} = (\kofg,\mu_{\alpha^*} = \alpha^* \circ \mu,\Delta_{\alpha^*} = \Delta \circ \alpha^*,\alpha^*,e,R).
\]
The twisted multiplication and comultiplication are given by
\[
\langle \mu_{\alpha^*}(\phi,\psi), u\rangle = \langle \phi,\alpha(u)\rangle \langle \psi,\alpha(u)\rangle \quad\text{and}\quad
\langle \Delta_{\alpha^*}(\phi), (u,v)\rangle = \langle \phi, \alpha(uv)\rangle.
\]

If the group morphism $\alpha \colon G \to G$ is invertible, then so is the induced bialgebra morphism $\alpha^* \colon \kofg \to \kofg$.  By Corollary ~\ref{cor:twist} (and Remark ~\ref{rk:R}) we have a quasi-triangular Hom-bialgebra
\[
\kofg_{\alpha^*}^{(n)} = (\kofg,\mu_{\alpha^*},\Delta_{\alpha^*},\alpha^*,e,\rn)
\]
for each $n \geq 1$.  As a function $G \times G \to \bk$, the twisted quasi-triangular structure $\rn$ is given by
\[
\rn(u,v) = \left\{((\alpha^*)^n \otimes (\alpha^*)^n)(R)\right\}(u,v) = R\left(\alpha^n(u),\alpha^n(v)\right)
\]
for $u,v \in G$.  This is the function \eqref{rnfunction} with $\alpha$ and $\alpha^{-1}$ interchanged.
\qed
\end{example}

\begin{example}[\textbf{Hom-quantum enveloping algebras}]
\label{ex:env}
Let us first recall Drinfel'd's quantum enveloping algebra $\uhg$ \cite[section 13]{dri87}, using the notations in \cite[section 3.3]{majid}.  Another exposition of $\uhg$ is given in \cite[XVII]{kassel}.  We refer the reader to \cite{hum,jac} for discussion of semi-simple Lie algebras and to \cite{am,bou,mat} for basics of topological algebras over the power series algebra $\ch$.

Let $\fg$ be a finite dimensional complex semi-simple Lie algebra, $A = (a_{ij})_{1\leq i,j \leq n}$ be its Cartan matrix, $\{\beta_i\}_{1 \leq i \leq n}$ be a system of positive simple roots, and $d_i = (\beta_i,\beta_i)/2$ be its root lengths, where $(-,-)$ is the inverse of the Killing form.  Define the $q$-symbols:
\begin{equation}
\label{qsymbols}
q_i = e^{hd_i/2}, \quad [m]_{q_i} = \frac{q_i^m - q_i^{-m}}{q_i - q_i^{-1}}, \quad
\begin{bmatrix}m\\r\end{bmatrix}_{q_i} = \frac{[m]_{q_i}!}{[r]_{q_i}![m-r]_{q_i}!},
\end{equation}
where $[r]_{q_i}! = [r]_{q_i} [r-1]_{q_i} \cdots [1]_{q_i}$ and $[0]_{q_i}! = 1$.

The quantum enveloping algebra $\uhg$ is defined as the topological $\ch$-algebra that is topologically generated by the set $\{H_i,X_{\pm i}\}_{1 \leq i \leq n}$ of $3n$ generators with the following relations:
\begin{equation}
\label{urelations}
[H_i,H_j] = 0, \quad [H_i,X_{\pm j}] = \pm a_{ij}X_{\pm j}, \quad [X_{+i},X_{-j}] = \delta_{ij}\frac{q_i^{H_i} - q_i^{-H_i}}{q_i - q_i^{-1}},\\
\end{equation}
and if $i \not= j$,
\begin{equation}
\label{ij}
\sum_{k=0}^{1-a_{ij}} (-1)^k\begin{bmatrix}1-a_{ij}\\k\end{bmatrix}_{q_i} X_{\pm i}^{1-a_{ij}-k}X_{\pm j} X_{\pm i}^k = 0.
\end{equation}
The comultiplication $\Delta$ in $\uhg$ is determined by
\begin{equation}
\label{udelta}
\Delta(H_i) = H_i \otimes 1 + 1 \otimes H_i, \quad
\Delta(X_{\pm i}) = X_{\pm i} \otimes q_i^{H_i/2} + q_i^{-H_i/2} \otimes X_{\pm i}
\end{equation}
for $1 \leq i \leq n$.  The bialgebra $\uhg$ becomes a quasi-triangular bialgebra when equipped with the quasi-triangular structure
\begin{equation}
\label{ugr}
R = \sum_{\ba \in \bN^n} \left\{\exp h\left[\frac{1}{2}t_0 + \frac{1}{4}(H_{\ba} \otimes 1 - 1 \otimes H_{\ba})\right]\right\}P_{\ba},
\end{equation}
where $\bN$ is the set of non-negative integers, $H_{\ba} = \sum_{i=1}^n a_iH_i$ for $\ba = (a_1, \ldots , a_n) \in \bN^n$, and $t_0 = \sum_{i,j} (DA)_{ij}^{-1} H_i \otimes H_j$ with $D = \diag(d_1, \ldots , d_n)$ (i.e., the diagonal matrix with $d_i$ as its $i$th diagonal entry).  The symbol $P_{\ba}$ denotes a certain polynomial in the variables $u_i = X_{+i} \otimes 1$ and $v_i = 1 \otimes X_{-i}$ that is homogeneous of degree $a_i$ in $u_i$ and $v_i$, and $P_0 = 1 \otimes 1$.  More information about the quasi-triangular structure $R$ \eqref{ugr} can be found in \cite{kr90,ls90,majid,rosso}.

We can get bialgebra morphisms on $\uhg$ as follows.  Let $\bc = (c_1, \ldots , c_n) \in \bC^n$ be any $n$-tuple of complex numbers.  Define $\hc = \sum_{i=1}^n c_iH_i$ and the $\ch$-linear map $\alphac \colon \uhg \to \uhg$ by
\begin{equation}
\label{alphacdef}
\alphac(u) = e^{h\hc}ue^{-h\hc}
\end{equation}
for $u \in \uhg$.  Then $\alphac$ is clearly an algebra automorphism.  To see that $\alphac$ is compatible with $\Delta$, first note that
\begin{equation}
\label{alphaH}
\alphac(H_j) = H_j
\end{equation}
for all $j$ because $H_i$ commutes with $H_j$ by the first relation in \eqref{urelations}.  So \eqref{udelta} implies that $\alphac^{\otimes 2} \circ \Delta$ and $\Delta \circ \alphac$ coincide when applied to $H_j$.  Also, we have
\begin{equation}
\label{alphaX}
\alphac(X_{\pm j}) = \gamma_j^{\pm 1}X_{\pm j},
\end{equation}
where $\gamma_j = \exp(\sum_{i=1}^n c_ia_{ij})$, by the second relation in \eqref{urelations}.  (More precisely, we are using $e^{chH_i}X_{\pm j}e^{-chH_i} = e^{\pm ca_{ij}}X_{\pm j}$ \cite[p.408 (2.5)]{kassel}, which is a consequence of the second relation in \eqref{urelations}.)  Since $\alphac$ fixes $q_j^{\pm H_j/2} = e^{\pm hd_jH_j/4}$, we infer from \eqref{udelta} that
\[
\alphac^{\otimes 2}(\Delta(X_{\pm j})) = \gamma_j^{\pm 1}\Delta(X_{\pm j}) = \Delta(\alphac(X_{\pm j})).
\]
Therefore, the map $\alphac$ is a bialgebra automorphism on $\uhg$.  Alternatively, one can use \eqref{alphaH} and \eqref{alphaX} as the definition of the map $\alphac$ (on the generators).  Then one checks directly that $\alphac$ preserves the relations \eqref{urelations} and \eqref{ij} and is compatible with the comultiplication \eqref{udelta}.

By Theorem ~\ref{thm:twist1} and Corollary ~\ref{cor:twist}, for each $n$-tuple $\bc \in \bC^n$ and each integer $t \geq 0$, we have a quasi-triangular Hom-bialgebra
\[
\uhg_\alpha^{(t)} = (\uhg,\mu_\alpha,\Delta_\alpha,\alpha,1,\rt)
\]
with $\alpha = \alphac$ \eqref{alphacdef}.  The twisted operations are given by
\[
\begin{split}
\mu_\alpha(u,v) &= e^{h\hc}uve^{-h\hc},\\
\Delta_\alpha(H_j) &= \Delta(H_j), \quad \Delta_\alpha(X_{\pm j}) = \gamma_j^{\pm 1}\Delta(X_{\pm j}) = e^{\pm\sum_{i=1}^n c_ia_{ij}}\Delta(X_{\pm j}).
\end{split}
\]
The twisted quasi-triangular structure $\rt$ is $(\alpha^t \otimes \alpha^t)(R)$, where $\alpha^0 = Id$.  To make it more explicit, note that $\alpha(H_{\ba}) = H_{\ba}$ and $\alpha^{\otimes 2}(t_0) = t_0$ because each $H_j$ is fixed by $\alpha$ \eqref{alphaH}.  So the entire exponential term in $R$ \eqref{ugr} is fixed by $\alpha^{\otimes 2}$.  For the polynomial $P_{\ba}$, let us write it as $P_{\ba}(u_1,\ldots,u_n,v_1,\ldots,v_n)$.  Since
\[
\alpha^{\otimes 2}(u_j) = \alpha(X_{+j}) \otimes 1 = \gamma_j (X_{+j} \otimes 1) = \gamma_j u_j
\]
and similarly $\alpha^{\otimes 2}(v_j) = \gamma_j^{-1} v_j$, we have
\[
(\alpha^t \otimes \alpha^t)P_{\ba}(u_1,\ldots,u_n,v_1,\ldots,v_n) = P_{\ba}(\gamma_1^tu_1, \ldots , \gamma_n^tu_n, \gamma_1^{-t}v_1, \ldots, \gamma_n^{-t}v_n).
\]
Therefore, we have
\[
\rt = \sum_{\ba \in \bN^n} \left\{\exp h\left[\frac{1}{2}t_0 + \frac{1}{4}(H_{\ba} \otimes 1 - 1 \otimes H_{\ba})\right]\right\}P_{\ba}(\gamma_1^tu_1, \ldots , \gamma_n^tu_n, \gamma_1^{-t}v_1, \ldots, \gamma_n^{-t}v_n),
\]
where $\gamma_j^{\pm t}  = \exp(\pm t\sum_{i=1}^n c_ia_{ij})$.
\qed
\end{example}

\begin{example}[\textbf{Hom-quantum enveloping algebra of $sl_2$}]
\label{ex:sl2}
Let us examine the special case $\fg = sl_2$ of the previous example.  The bialgebra $\uhsl$ was first studied in \cite{kr,skl3}, and its quasi-triangular structure \eqref{slR} was given in \cite[p.816]{dri87}.  Using the notations of the previous example with $\fg = sl_2$, we have $n=1$, $a_{11} = 2$, $d_1 = 1$, and $\uhsl$ is the topological $\ch$-algebra generated by $\{H,X_{\pm}\}$ with relations \eqref{urelations}, where $q_1 = q = e^{h/2}$.  The relations \eqref{ij} are empty for $\fg = sl_2$.  The quasi-triangular structure is
\begin{equation}
\label{slR}
R = \sum_{a \geq 0} \frac{(q - q^{-1})^a}{[a]_q!} q^{-a(a+1)/2} \left\{\exp \frac{h}{4}\left[H \otimes H + a(H \otimes 1 - 1 \otimes H)\right]\right\}(X_+^a \otimes X_-^a).
\end{equation}
As in the previous example, given any complex number $c$, we have a bialgebra automorphism $\alpha = \alpha_c \colon \uhsl \to \uhsl$ defined as
\begin{equation}
\label{alphasl2}
\alpha(u) = e^{chH}ue^{-chH}
\end{equation}
for $u \in \uhsl$.

By Theorem ~\ref{thm:twist1} we have a quasi-triangular Hom-bialgebra
\[
\uhsl_\alpha = (\uhsl,\mu_\alpha,\Delta_\alpha,\alpha,1,R).
\]
Moreover, $R$ \eqref{slR} is $\alpha$-invariant, i.e., $\alpha^{\otimes 2}(R) = R$.  This is an immediate consequence of $\alpha(H) = H$ \eqref{alphaH} and $\alpha(X_{\pm}) = e^{\pm 2c}X_{\pm}$ \eqref{alphaX}.  Quasi-triangular Hom-bialgebras with $\alpha$-invariant $R$ play a major role in Theorem ~\ref{thm:hybe} below.  We will revisit this example in section ~\ref{sec:module}.\qed
\end{example}

\section{Solutions of the HYBE from quasi-triangular Hom-bialgebras}
\label{sec:hybe}

In this section, we extend the relationship between the QYBE \eqref{qybe} and the YBE \eqref{ybe}, as discussed in the introduction, to the Hom setting.  Let us first recall the Hom version of the YBE.

\begin{definition}[\cite{yau5}]
\label{def:hybe}
Let $V$ be a $\bk$-module and $\alpha \colon V \to V$ be a linear map.  The \textbf{Hom-Yang-Baxter equation} (HYBE) is defined as
\begin{equation}
\label{hybe}
(\alpha \otimes B) \circ (B \otimes \alpha) \circ (\alpha \otimes B) = (B \otimes \alpha) \circ (\alpha \otimes B) \circ (B \otimes \alpha),
\end{equation}
where $B \colon V^{\otimes 2} \to V^{\otimes 2}$ is a bilinear map that commutes with $\alpha^{\otimes 2}$.  In this case, we say that $B$ is a solution of the HYBE for $(V,\alpha)$.
\end{definition}
The YBE \eqref{ybe} is the special case of the HYBE \eqref{hybe} in which $\alpha = Id$.

As in the classical case, solutions of the HYBE are closely related to the braid relations and braid group representations \cite{artin2,artin}.  Indeed, suppose that $B \colon V^{\otimes 2} \to V^{\otimes 2}$ is a solution of the HYBE for $(V,\alpha)$.  Then for $n \geq 3$ and $1 \leq i \leq n - 1$, the operators
\[
B_i = \alpha^{\otimes(i-1)} \otimes B \otimes \alpha^{\otimes (n - i - 1)} \colon V^{\otimes n} \to V^{\otimes n}
\]
satisfy the braid relations
\[
B_i B_j = B_j B_i \quad \text{ if $|i - j| > 1$} \quad \text{ and } \quad
B_i B_{i+1} B_i = B_{i+1} B_i B_{i+1}.
\]
In particular, if $\alpha$ and $B$ are both invertible, then so are the operators $B_i$.  In this case, there is a corresponding representation of the braid group on $V^{\otimes n}$ \cite[Theorem 1.4]{yau5}.  Many examples of solutions of the HYBE can be found in \cite{yau5,yau6}.

We will show that every quasi-triangular Hom-bialgebra (in which $R$ is fixed by $\alpha^{\otimes 2}$) gives rise to many solutions of the HYBE via its modules.  To make this precise, we need a suitable notion of modules over a Hom-associative algebra.

\begin{definition}
\label{def:module}
\begin{enumerate}
\item
A \textbf{Hom-module} is a pair $(V,\alpha)$ consisting of a $\bk$-module $V$ and a linear map $\alpha$.
\item
Let $(A,\mu,\alpha_A)$ be a Hom-associative algebra (Definition ~\ref{def:homas}).  By an \textbf{$A$-module} we mean a Hom-module $(M,\alpha_M)$ together with a linear map $\lambda \colon A \otimes M \to M$ such that
\begin{equation}
\label{moduleaxioms}
(ab)\alpha_M(x) = \alpha_A(a)(bx) \quad\text{and}\quad \alpha_M(ax) = \alpha_A(a)\alpha_M(x)
\end{equation}
for $a,b \in A$ and $x \in M$, where $\lambda(a,x)$ is abbreviated to $ax$.
\end{enumerate}
\end{definition}

Note that a slightly different notion of a module over a Hom-associative algebra was defined in \cite{ms2}.  The difference with Definition ~\ref{def:module} is that in \cite{ms2}, the second condition in \eqref{moduleaxioms} is not required.

\begin{example}
\label{ex:module}
\begin{enumerate}
\item
A Hom-associative algebra $(A,\mu,\alpha)$ is an $A$-module with structure map $\lambda = \mu$.  In this case, the axioms \eqref{moduleaxioms} are exactly the Hom-associativity and the multiplicativity of $\alpha$ in $A$.
\item
Let $(A,\mu)$ be an associative algebra, $M$ be an $A$-module (in the usual sense) with structure map $\lambda$, $\alpha_A \colon A \to A$ be an algebra morphism, and $\alpha_M \colon M \to M$ be a linear map.  Suppose that $\alpha_M \circ \lambda = \lambda \circ (\alpha_A \otimes \alpha_M)$.  This is the case, for example, if $\alpha_A = Id$ and $\alpha_M$ is an $A$-module morphism.  Define the twisted action $\lambda_\alpha = \alpha_M \circ \lambda$.  Then it is easy to check that  $(M,\alpha_M)$ becomes a module over the Hom-associative algebra $A_\alpha = (A,\mu_\alpha = \alpha_A \circ \mu,\alpha_A)$ (Example ~\ref{ex:homas}) with structure map $\lambda_\alpha$.\qed
\end{enumerate}
\end{example}

In a quasi-triangular Hom-bialgebra (Definition ~\ref{def:qthb}), we say that the element $R$ is \textbf{$\alpha$-invariant} if $\alpha^{\otimes 2}(R) = R$.  Some examples of $\alpha$-invariant $R$ were given in Example ~\ref{ex:sl2}.  When $R$ is $\alpha$-invariant, the two versions of the QHYBE (\eqref{qhybe} and \eqref{qhybe'}) coincide, as was noted in Remark ~\ref{rk:alphainv}.

We can now describe the relationship between quasi-triangular Hom-bialgebras and the HYBE \eqref{hybe}.


\begin{theorem}
\label{thm:hybe}
Let $(A,\mu,\Delta,\alpha,c,R)$ be a quasi-triangular Hom-bialgebra in which $R$ is $\alpha$-invariant and $(M,\alpha_M)$ be an $A$-module.  Then the operator
\[
B = \tau \circ R \colon M^{\otimes 2} \to M^{\otimes 2}
\]
is a solution of the HYBE for $(M,\alpha_M)$.
\end{theorem}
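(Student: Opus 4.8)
The plan is to mimic, in the Hom setting, the classical derivation of the Yang--Baxter equation \eqref{ybe} from the quantum Yang--Baxter equation \eqref{qybe}, with the role of the QYBE now played by the QHYBE of Theorem~\ref{thm:qhybe}. Throughout I write $R = \sum s_i \otimes t_i$ and let $R$ act on $M^{\otimes 2}$ through the module structure, so that $B(v \otimes w) = \sum t_i w \otimes s_i v$. First I would dispose of the well-posedness requirement built into Definition~\ref{def:hybe}, namely that $B$ commute with $\alpha_M^{\otimes 2}$. Expanding $\alpha_M^{\otimes 2} \circ B$ and pulling each $\alpha_M$ inside the action by the multiplicativity axiom $\alpha_M(ax) = \alpha(a)\alpha_M(x)$ in \eqref{moduleaxioms} turns the coefficient into $\sum \alpha(t_i) \otimes \alpha(s_i)$ acting on $\alpha_M(w) \otimes \alpha_M(v)$; since $R$ is $\alpha$-invariant this coefficient is just $\sum t_i \otimes s_i$, which is exactly what $B \circ \alpha_M^{\otimes 2}$ produces. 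So the $\alpha$-invariance of $R$ together with module multiplicativity gives this commutation for free.

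For the HYBE itself I would apply $B = \tau \circ R$ three times to each side, obtaining triple sums over indices $i,j,k$ in which the module element in each tensor slot sits inside an iterated action such as $t_j(t_i w)$ or $s_k(s_j \alpha_M(u))$. The key reduction is to normalize every slot to the form (a product in $A$) applied to $\alpha_M^2(\text{module element})$. This is achieved by three moves applied slot by slot: use multiplicativity to rewrite $\alpha_M(ax)$ as $\alpha(a)\alpha_M(x)$; use $\alpha$-invariance of $R$ to replace a bare $s_k$ (resp.\ $t_k$) by $\alpha(s_k)$ (resp.\ $\alpha(t_k)$) wherever the matching index-$k$ factor also occurs, thereby supplying the outer $\alpha$ that the Hom-associativity axiom demands; and then use Hom-associativity $\alpha(a)(bx) = (ab)\alpha_M(x)$ from \eqref{moduleaxioms} to collapse each nested action. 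After this both composites acquire the same common factor $\alpha_M^2(w) \otimes \alpha_M^2(v) \otimes \alpha_M^2(u)$, and the problem reduces to an equality of two coefficient tensors in $A^{\otimes 3}$.

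Finally I would identify these two coefficient tensors with the two sides of the QHYBE. After one more round of $\alpha$-invariance to strip the residual $\alpha$'s, the coefficient coming from $(\alpha_M \otimes B)(B \otimes \alpha_M)(\alpha_M \otimes B)$ becomes the image, under the swap of the first and third tensor factors of $M^{\otimes 3}$, of the explicit expansion of $R_{12}(R_{13}R_{23})$ computed in \eqref{qlhs'}, while the coefficient from the other composite becomes the same swap applied to $(R_{23}R_{13})R_{12}$. Because $R$ is $\alpha$-invariant, the two sides of \eqref{qhybe'} coincide by Theorem~\ref{thm:qhybe} and Remark~\ref{rk:alphainv}, so the two coefficient tensors agree and the HYBE follows.

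I expect the main obstacle to be precisely the twist bookkeeping in the middle step. Since a Hom-module is not associative, the nested actions cannot be reassociated freely, and each application of Hom-associativity forces an $\alpha$ onto the left factor and an extra $\alpha_M$ onto the module element. Getting the two composites to land on the \emph{same} common factor $\alpha_M^2(\,\cdot\,)$ and on coefficient tensors that are genuinely the two sides of one QHYBE depends on inserting the $\alpha$-invariance substitutions in exactly the right slots; this is the step where the hypothesis that $R$ be $\alpha$-invariant does the real work, and it is the reason the two a priori distinct equations \eqref{qhybe} and \eqref{qhybe'} may be used interchangeably.
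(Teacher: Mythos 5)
Your proposal is correct and follows essentially the same route as the paper's proof: verify the commutation of $B$ with $\alpha_M^{\otimes 2}$ from $\alpha$-invariance and module multiplicativity, then normalize each tensor slot of the triple composites to the form $(ab)\alpha_M^2(\cdot)$ via Hom-associativity and $\alpha$-invariance, and match the resulting coefficient tensors using the QHYBE (the paper uses \eqref{qhybe} in its $\alpha$-invariant form, while you invoke \eqref{qhybe'}, but by Remark~\ref{rk:alphainv} these coincide here). The "twist bookkeeping" you flag as the main obstacle is exactly the chain of substitutions carried out in the paper's displayed computation, so no new idea is missing.
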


\begin{proof}
To simplify the typography, we will omit the subscripts in $\alpha_A$ and $\alpha_M$.  Write $R = \sum s_i \otimes t_i$.  Then the $\alpha$-invariance of $R$ means that
\begin{equation}
\label{alphainvariance}
\sum s_i \otimes t_i = \sum \alpha(s_i) \otimes \alpha(t_i).
\end{equation}
Recall that $\tau$ denotes the twist isomorphism.  So the map $B = \tau \circ R$ is given by
\[
B(v \otimes w) = \sum t_iw \otimes s_iv
\]
for $v,w \in M$.  That $B$ commutes with $\alpha^{\otimes 2}$ follows from the $\alpha$-invariance of $R$ and the second axiom in \eqref{moduleaxioms}.

It remains to check that $B = \tau \circ R$ satisfies the HYBE \eqref{hybe}.  Note that the $\alpha$-invariance of $R$ \eqref{alphainvariance} and the computation \eqref{qlhs} imply that the QHYBE \eqref{qhybe} now takes the form
\begin{equation}
\label{q}
\sum s_js_i \otimes t_js_k \otimes t_it_k = (R_{12}R_{13})R_{23} = R_{23}(R_{13}R_{12}) = \sum s_js_i \otimes s_kt_i \otimes t_kt_j.
\end{equation}
Let $x$ denote a typical generator $u \otimes v \otimes w \in M^{\otimes 3}$. Using the $\alpha$-invariance of $R$ \eqref{alphainvariance} and the module axioms \eqref{moduleaxioms}, a direct computation gives:
\begin{subequations}
\allowdisplaybreaks
\begin{align*}
(B \otimes \alpha)(\alpha \otimes B)(B \otimes \alpha)(x)
&= t_k(t_j\alpha(w)) \otimes s_k\alpha(t_iv) \otimes \alpha(s_j(s_iu))\\
&= t_k(t_j\alpha(w)) \otimes s_k(\alpha(t_i)\alpha(v)) \otimes \alpha(s_j)(\alpha(s_i)\alpha(u))\\
&= \alpha(t_k)(t_j\alpha(w)) \otimes \alpha(s_k)(t_i\alpha(v)) \otimes \alpha(s_j)(s_i\alpha(u))\\
&= (t_kt_j)\alpha^2(w) \otimes (s_kt_i)\alpha^2(v) \otimes (s_js_i)\alpha^2(u)\\
&= (t_it_k)\alpha^2(w) \otimes (t_js_k)\alpha^2(v) \otimes (s_js_i)\alpha^2(u) \quad\text{by \eqref{q}}\\
&= \alpha(t_i)(t_k\alpha(w)) \otimes \alpha(t_j)(s_k\alpha(v)) \otimes \alpha(s_j)(s_i\alpha(u))\\
&= \alpha(t_i)(\alpha(t_k)\alpha(w)) \otimes t_j(\alpha(s_k)\alpha(v)) \otimes s_j(s_i\alpha(u))\\
&= \alpha(t_i(t_kw)) \otimes t_j\alpha(s_kv) \otimes s_j(s_i\alpha(u))\\
&= (\alpha \otimes B)(B \otimes \alpha)(\alpha \otimes B)(x).
\end{align*}
\end{subequations}
This proves that $B = \tau \circ R$ is a solution of the HYBE for $(M,\alpha_M)$.
\end{proof}

\begin{corollary}
\label{cor:hybe}
Let $(A,\mu,\Delta,R)$ be a quasi-triangular bialgebra, $M$ be an $A$-module with structure map $\lambda$, $\alpha_A \colon A \to A$ be a bialgebra morphism such that $\alpha_A^{\otimes 2}(R) = R$, and $\alpha_M \colon M \to M$ be a linear map such that $\alpha_M \circ \lambda = \lambda \circ (\alpha_A \otimes \alpha_M)$.  Then the operator $B_\alpha \colon M^{\otimes 2} \to M^{\otimes 2}$ defined by
\begin{equation}
\label{Balpha}
B_\alpha(v \otimes w) = \sum \alpha_M(\lambda(t_i,w)) \otimes \alpha_M(\lambda(s_i,v))
\end{equation}
for $v,w \in M$, where $R = \sum s_i \otimes t_i$, is a solution of the HYBE for $(M,\alpha_M)$.
\end{corollary}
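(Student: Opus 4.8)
The plan is to deduce this directly from Theorem~\ref{thm:hybe} by recognizing $B_\alpha$ as the operator $\tau \circ R$ for a suitable quasi-triangular Hom-bialgebra together with a suitable module over it, so that no new computation is needed. First I would apply Theorem~\ref{thm:twist1} with the twisting map $\alpha = \alpha_A$ to obtain the quasi-triangular Hom-bialgebra
\[
A_{\alpha_A} = (A, \mu_{\alpha_A}, \Delta_{\alpha_A}, \alpha_A, 1, R),
\]
where $\mu_{\alpha_A} = \alpha_A \circ \mu$ and $\Delta_{\alpha_A} = \Delta \circ \alpha_A$. The hypothesis $\alpha_A^{\otimes 2}(R) = R$ says precisely that $R$ is $\alpha$-invariant in $A_{\alpha_A}$, so this Hom-bialgebra satisfies the standing assumption on $R$ in Theorem~\ref{thm:hybe}.

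Next I would produce the module. The assumed compatibility $\alpha_M \circ \lambda = \lambda \circ (\alpha_A \otimes \alpha_M)$ is exactly the condition required in Example~\ref{ex:module}(2), so $(M,\alpha_M)$ becomes an $A_{\alpha_A}$-module with twisted structure map $\lambda_{\alpha_A} = \alpha_M \circ \lambda$. With these two ingredients in place, Theorem~\ref{thm:hybe} applies and yields that $B = \tau \circ R$, computed with the twisted action $\lambda_{\alpha_A}$, is a solution of the HYBE for $(M,\alpha_M)$.

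Finally I would unwind the definition of $B$ to match \eqref{Balpha}. By Theorem~\ref{thm:hybe}, writing $R = \sum s_i \otimes t_i$, the operator is $B(v \otimes w) = \sum t_i w \otimes s_i v$, where the juxtaposition now denotes the $A_{\alpha_A}$-action $\lambda_{\alpha_A}$. Since $\lambda_{\alpha_A}(a,x) = \alpha_M(\lambda(a,x))$, this is
\[
B(v \otimes w) = \sum \alpha_M(\lambda(t_i,w)) \otimes \alpha_M(\lambda(s_i,v)),
\]
which is exactly $B_\alpha(v \otimes w)$. Hence $B_\alpha$ is a solution of the HYBE for $(M,\alpha_M)$.

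I do not expect any serious obstacle here, since all the real work is carried out in Theorem~\ref{thm:hybe}; the only points requiring care are purely bookkeeping, namely verifying that the hypotheses of Theorem~\ref{thm:twist1}, of Example~\ref{ex:module}(2), and of Theorem~\ref{thm:hybe} are all met by the given data, and checking that the twisted action converts the abstract formula $\tau \circ R$ into the explicit expression \eqref{Balpha}. The one subtlety worth flagging is that the $R$-matrix is unchanged under the twisting of Theorem~\ref{thm:twist1}, since only the multiplication and comultiplication are twisted there; thus the same $R = \sum s_i \otimes t_i$ appears throughout, and the entire effect of the twisting is absorbed into the module action $\lambda_{\alpha_A}$.
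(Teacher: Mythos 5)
Your proposal is correct and follows the paper's proof exactly: the paper likewise invokes Theorem~\ref{thm:twist1} to form $A_\alpha$, Example~\ref{ex:module}(2) to make $(M,\alpha_M)$ an $A_\alpha$-module with structure map $\lambda_\alpha = \alpha_M \circ \lambda$, and then Theorem~\ref{thm:hybe} to conclude. The unwinding of $\tau \circ R$ into the explicit formula \eqref{Balpha} is also carried out the same way.
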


\begin{proof}
By Theorem ~\ref{thm:twist1} $A_\alpha = (A,\mu_\alpha,\Delta_\alpha,\alpha,1,R)$ is a quasi-triangular Hom-bialgebra, and $(M,\alpha_M)$ is an $A_\alpha$-module (Example ~\ref{ex:module}) with structure map $\lambda_\alpha = \alpha_M \circ \lambda$.  Therefore, Theorem ~\ref{thm:hybe} implies that there is a solution of the HYBE for $(M,\alpha_M)$ of the form
\[
B_\alpha(v \otimes w) = \sum \lambda_\alpha(t_i,w) \otimes \lambda_\alpha(s_i,v)
= \sum \alpha_M(\lambda(t_i,w)) \otimes \alpha_M(\lambda(s_i,v)),
\]
as was to be shown.
\end{proof}

The following result is the special case of the previous Corollary with $\alpha_A = Id$.

\begin{corollary}
\label{cor2:hybe}
Let $(A,\mu,\Delta,R)$ be a quasi-triangular bialgebra, $M$ be an $A$-module, and $\alpha_M$ be an $A$-module morphism.  Then the operator $B_\alpha \colon M^{\otimes 2} \to M^{\otimes 2}$ defined in \eqref{Balpha} is a solution of the HYBE for $(M,\alpha_M)$.
\end{corollary}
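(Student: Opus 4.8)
The plan is to obtain this as the special case $\alpha_A = Id$ of Corollary~\ref{cor:hybe}, so the entire argument reduces to checking that the hypotheses of that corollary are met when we take $\alpha_A$ to be the identity map on $A$.

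First I would set $\alpha_A = Id$. The identity map is trivially a bialgebra morphism, and it obviously satisfies $\alpha_A^{\otimes 2}(R) = Id^{\otimes 2}(R) = R$, so the $\alpha$-invariance requirement $\alpha_A^{\otimes 2}(R) = R$ in Corollary~\ref{cor:hybe} holds automatically. Next I would examine the remaining compatibility condition relating $\alpha_M$ to the structure map $\lambda$. In Corollary~\ref{cor:hybe} this condition is $\alpha_M \circ \lambda = \lambda \circ (\alpha_A \otimes \alpha_M)$; with $\alpha_A = Id$ it becomes $\alpha_M \circ \lambda = \lambda \circ (Id \otimes \alpha_M)$, that is, $\alpha_M(ax) = a\,\alpha_M(x)$ for all $a \in A$ and $x \in M$. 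This is precisely the assertion that $\alpha_M$ is an $A$-module morphism, which is exactly our hypothesis. Hence all the hypotheses of Corollary~\ref{cor:hybe} are satisfied.

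With these verifications in place, Corollary~\ref{cor:hybe} immediately yields that the operator $B_\alpha$ defined in \eqref{Balpha} is a solution of the HYBE for $(M,\alpha_M)$, which is the desired conclusion. There is no real obstacle here: the only point requiring any attention is the translation of the abstract compatibility condition $\alpha_M \circ \lambda = \lambda \circ (\alpha_A \otimes \alpha_M)$ into the familiar notion of an $A$-linear map, and once that identification is made the statement follows with no further computation.
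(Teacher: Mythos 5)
Your proposal is correct and follows exactly the paper's approach: the paper states this corollary as the special case $\alpha_A = Id$ of Corollary~\ref{cor:hybe}, which is precisely what you do. Your explicit verification that the compatibility condition $\alpha_M \circ \lambda = \lambda \circ (Id \otimes \alpha_M)$ is just $A$-linearity of $\alpha_M$ is the only point of substance, and you handle it correctly.
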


\section{Modules over $\uhsl_\alpha$}
\label{sec:module}

In this section, we illustrate the results in the previous section with certain modules over the quasi-triangular Hom-bialgebra $\uhsl_\alpha$, which was discussed in Example ~\ref{ex:sl2}.  We use the same notations as in Examples ~\ref{ex:env} and ~\ref{ex:sl2}.  In particular, $\uhsl$ is the topological $\ch$-algebra generated by $\{H,X_{\pm}\}$ with relations \eqref{urelations}, where $q_1 = q = e^{h/2}$, and its comultiplication is defined as in \eqref{udelta}.  It becomes a quasi-triangular Hom-bialgebra when equipped with the quasi-triangular structure $R$ \eqref{slR}.

Fix a complex number $c$, and let $\alpha \colon \uhsl \to \uhsl$ be the bialgebra automorphism defined by $\alpha(H) = H$ and $\alpha(X_{\pm}) = \gamma^{\pm 1}X_{\pm}$, where $\gamma = e^{2c}$.  Equivalently, $\alpha$ is the inner automorphism \eqref{alphasl2} induced by $e^{chH}$.  Then $\uhsl_\alpha$ is the quasi-triangular Hom-bialgebra obtained from $\uhsl$ by twisting its (co)multiplication along $\alpha$ (Theorem ~\ref{thm:twist1}).  Moreover, the element $R$ \eqref{slR} is $\alpha$-invariant, i.e., $(\alpha \otimes \alpha)(R) = R$.

Fix a non-negative integer $n$.  Let $\vn$ be the free $\ch$-module with a basis $\{v_i\}_{0 \leq i \leq n}$.  Then $\vn$ becomes a (topological) $\uhsl$-module via the map $\rho \colon \uhsl \otimes \vn \to \vn$ determined by
\begin{equation}
\label{rhov}
\rho(X_+,v_i) = [n+1-i]_q v_{i-1},\quad
\rho(X_-,v_i) = [i+1]_q v_{i+1},\quad
\rho(H,v_i) = (n-2i)v_i
\end{equation}
for $0 \leq i \leq n$.  In \eqref{rhov} we set $v_{-1} = 0 = v_{n+1}$, and $[m]_q$ is defined as in \eqref{qsymbols}.  See, for example, \cite[XVII.4]{kassel} and \cite{dri87,kr,skl3}.  We will apply Corollary ~\ref{cor:hybe} to the $\uhsl$-module $\vn$.

Consider the $\ch$-linear automorphism $\alpha \colon \vn \to \vn$ defined by
\begin{equation}
\label{alphavn}
\alpha(v_i) = \gamma^{-i}v_i
\end{equation}
for $0 \leq i \leq n$.

\begin{lemma}
\label{lem:vn}
We have
\begin{equation}
\label{rhoalphavn}
\alpha \circ \rho = \rho \circ (\alpha \otimes \alpha)
\end{equation}
as maps $\uhsl \otimes \vn \to \vn$.
\end{lemma}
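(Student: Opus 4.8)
The plan is to verify the intertwining identity \eqref{rhoalphavn} on elementary tensors $u \otimes v_i$, first reducing to the three topological generators $u \in \{H, X_\pm\}$ and then propagating to all of $\uhsl$. Both $\alpha \circ \rho$ and $\rho \circ (\alpha \otimes \alpha)$ are $\ch$-linear, and $\{v_i\}_{0 \leq i \leq n}$ is a basis of $\vn$, so it suffices to establish
\[
\alpha(\rho(u, v_i)) = \rho(\alpha(u), \alpha(v_i))
\]
for each $i$ and each generator $u$, where the outer $\alpha$'s denote the map \eqref{alphavn} on $\vn$ and the inner $\alpha$ denotes the automorphism \eqref{alphasl2} on $\uhsl$ determined by $\alpha(H) = H$ and $\alpha(X_\pm) = \gamma^{\pm 1} X_\pm$ with $\gamma = e^{2c}$.

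First I would record the direct computation on generators using $\alpha(v_i) = \gamma^{-i}v_i$ and the action formulas \eqref{rhov}. For $H$ both sides equal $(n-2i)\gamma^{-i}v_i$; for $X_+$ both sides equal $\gamma^{1-i}[n+1-i]_q v_{i-1}$; and for $X_-$ both sides equal $\gamma^{-(i+1)}[i+1]_q v_{i+1}$. The point is that the scalar $\gamma^{\pm 1}$ produced by $\alpha$ on $X_\pm$ exactly compensates the change in the power of $\gamma$ caused by $\rho$ shifting $v_i$ to $v_{i \mp 1}$. The boundary conventions $v_{-1} = 0 = v_{n+1}$ render the cases $i = 0$ and $i = n$ degenerate with both sides vanishing, so no separate treatment is needed.

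The substantive step is the propagation from generators to all of $\uhsl$. I would let $S$ be the set of $u \in \uhsl$ satisfying the displayed identity for every $i$; by $\ch$-linearity of all maps in sight this is a $\ch$-submodule, and by continuity it is closed in the $h$-adic topology. The key is that $S$ is closed under multiplication: because $\vn$ is an honest (associative) $\uhsl$-module one has $\rho(uu', v_i) = \rho(u, \rho(u', v_i))$, and because $\alpha$ is an algebra morphism one has $\rho(\alpha(u)\alpha(u'), -) = \rho(\alpha(uu'), -)$; chaining the identity for $u'$ and then for $u$ (the latter applied to the vector $\rho(u', v_i)$, which is legitimate since the identity for a fixed $u$ extends from basis vectors to all of $\vn$ by linearity) yields the identity for $uu'$. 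As $S$ contains $H, X_\pm$, which topologically generate $\uhsl$, we get $S = \uhsl$, proving \eqref{rhoalphavn}. The only real obstacle is the bookkeeping of this reduction — confirming both that the identity is multiplicative in $u$, so that checking algebra generators suffices, and that passing to the topological completion is harmless; both follow from associativity of the action together with the multiplicativity and continuity of $\alpha$. Once established, the lemma supplies precisely the hypothesis $\alpha_M \circ \lambda = \lambda \circ (\alpha_A \otimes \alpha_M)$ needed to apply Corollary \ref{cor:hybe} to $\vn$.
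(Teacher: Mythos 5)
Your proposal is correct and follows essentially the same route as the paper: reduce to the topological generators $H, X_{\pm}$ acting on basis vectors and verify that both sides produce the scalars $(n-2i)\gamma^{-i}$, $\gamma^{1-i}[n+1-i]_q$, and $\gamma^{-(i+1)}[i+1]_q$ respectively, exactly as in the paper's computation. The only difference is that you spell out the justification for ``it suffices to check on generators'' (closure of the intertwining identity under products, linearity, and $h$-adic limits), which the paper states without proof.
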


\begin{proof}
It suffices to check \eqref{rhoalphavn} on the elements $X_{\pm} \otimes v_i$ and $H \otimes v_i$.  When applied to $H \otimes v_i$, both sides of \eqref{rhoalphavn} are equal to $(n-2i)\gamma^{-i}v_i$.  On the other hand, we have
\[
\begin{split}
\alpha(\rho(X_+,v_i)) &= \alpha([n+1-i]_q v_{i-1})\\
&= [n+1-i]_q \gamma^{-i+1}v_{i-1}\\
&= \rho(\gamma X_+,\gamma^{-i}v_i)\\
&= \rho(\alpha(X_+),\alpha(v_i)).
\end{split}
\]
A similar computation shows that both sides of \eqref{rhoalphavn}, when applied to $X_-\otimes v_i$, are equal to $[i+1]_q\gamma^{-i-1}v_{i+1}$.
\end{proof}

\begin{proposition}
\label{vnmodule}
The map $\rho_\alpha = \alpha \circ \rho$ \eqref{rhoalphavn} gives $(\vn,\alpha)$ the structure of a $\uhsl_\alpha$-module.
\end{proposition}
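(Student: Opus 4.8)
The plan is to recognize Proposition~\ref{vnmodule} as a direct instance of the general twisting construction for modules in Example~\ref{ex:module}(2), with the one nontrivial hypothesis of that construction already discharged by Lemma~\ref{lem:vn}. So essentially no new computation is required; the real content is in correctly matching up the data.

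First I would fix the identifications needed to apply Example~\ref{ex:module}(2). Take the associative algebra there to be the underlying (topological) $\ch$-algebra $\uhsl$ with its ordinary multiplication $\mu$; take the module $M = \vn$ with its ordinary $\uhsl$-module structure $\rho$ from \eqref{rhov}; take $\alpha_A = \alpha$, which is a bialgebra (in particular algebra) automorphism of $\uhsl$ by Example~\ref{ex:sl2}; and take $\alpha_M = \alpha$ to be the $\ch$-linear automorphism of $\vn$ from \eqref{alphavn}. Under these choices the twisted Hom-associative algebra $A_\alpha = (\uhsl,\mu_\alpha = \alpha\circ\mu,\alpha)$ of Example~\ref{ex:homas} is exactly the Hom-associative algebra underlying the quasi-triangular Hom-bialgebra $\uhsl_\alpha$, and the twisted action $\lambda_\alpha = \alpha_M\circ\rho$ is exactly $\rho_\alpha = \alpha\circ\rho$.

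Next I would verify the sole hypothesis of Example~\ref{ex:module}(2), namely the compatibility $\alpha_M\circ\rho = \rho\circ(\alpha_A\otimes\alpha_M)$. But this is precisely Lemma~\ref{lem:vn}, which establishes $\alpha\circ\rho = \rho\circ(\alpha\otimes\alpha)$ on the topological generators $H\otimes v_i$ and $X_{\pm}\otimes v_i$. With this in hand, Example~\ref{ex:module}(2) immediately yields that $(\vn,\alpha)$ is a module over $A_\alpha = \uhsl_\alpha$ with structure map $\rho_\alpha = \alpha\circ\rho$, which is the assertion. Unwinding the reasoning, the two module axioms of \eqref{moduleaxioms} for the twisted product and action reduce, after stripping off an outer $\alpha$, to the associativity of the ordinary $\uhsl$-action $\rho$ and to the compatibility of Lemma~\ref{lem:vn}, respectively.

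Since all the genuine verification has already been carried out in Lemma~\ref{lem:vn}, I expect no real obstacle here; the only point requiring care is bookkeeping, namely confirming that $\rho_\alpha = \alpha\circ\rho$ coincides with the twisted action $\lambda_\alpha$ of Example~\ref{ex:module}(2) and that the single map $\alpha$ consistently plays the roles of both $\alpha_A$ and $\alpha_M$. A minor background subtlety is the topological setting over $\ch$, but as all maps are $\ch$-linear and the axioms are linear identities, they may be checked on the topological generators exactly as in Lemma~\ref{lem:vn}.
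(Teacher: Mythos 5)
Your proposal is correct and matches the paper's own proof, which likewise deduces the proposition immediately from Lemma~\ref{lem:vn} together with Example~\ref{ex:module}(2). You simply spell out the bookkeeping (identifying $\alpha_A=\alpha_M=\alpha$ and $\lambda_\alpha=\rho_\alpha$) that the paper leaves implicit.
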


\begin{proof}
This is an immediate consequence of Lemma ~\ref{lem:vn} and Example ~\ref{ex:module} (part (2)).  
\end{proof}

Moreover, by Lemma ~\ref{lem:vn} and Corollary ~\ref{cor:hybe}, there is a solution of the HYBE for $(\vn,\alpha)$ of the form \eqref{Balpha}:
\begin{equation}
\label{BalphaVn}
B_\alpha = \alpha^{\otimes 2} \circ \tau \circ R \colon \vn^{\otimes 2} \to \vn^{\otimes 2},
\end{equation}
where $R$ \eqref{slR} acts on $\vn^{\otimes 2}$ via the original $\uhsl$-module structure $\rho$.  Let us write down $B_\alpha$ explicitly for the case $\vone$.

\begin{proposition}
\label{BalphaV1}
With respect to the basis $\{v_0 \otimes v_0, v_0 \otimes v_1, v_1 \otimes v_0, v_1 \otimes v_1\}$ of $\vone^{\otimes 2}$, the solution $B_\alpha = \alpha^{\otimes 2} \circ \tau \circ R$ of the HYBE for $(\vone,\alpha)$ is given by the matrix
\begin{equation}
\label{V1}
B_\alpha = q^{-\frac{1}{2}}\Bmatrix,
\end{equation}
where $q = e^{h/2}$.
\end{proposition}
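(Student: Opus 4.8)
The plan is to compute $B_\alpha = \alpha^{\otimes 2}\circ\tau\circ R$ directly on the four basis vectors $v_i\otimes v_j$ ($i,j\in\{0,1\}$) of $\vone^{\otimes 2}$ and then read off the matrix in \eqref{V1}. First I would record the generator action on $\vone$ from \eqref{rhov} with $n=1$: namely $X_+v_0=0$, $X_+v_1=v_0$, $X_-v_0=v_1$, $X_-v_1=0$, and $Hv_i=(1-2i)v_i$, using $[1]_q=1$. The crucial simplification is that $X_+^2$ annihilates $\vone$, so in the defining sum \eqref{slR} for $R$ only the terms $a=0$ and $a=1$ act nontrivially on $\vone^{\otimes 2}$; every term with $a\geq 2$ carries the factor $X_+^a\otimes X_-^a$ and hence vanishes.

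Next I would evaluate the two surviving terms. For $a=0$ the operator is $\exp(\tfrac{h}{4}H\otimes H)$, which is diagonal: since $H\otimes H$ has eigenvalue $(1-2i)(1-2j)$ on $v_i\otimes v_j$ and $q=e^{h/2}$, this term multiplies $v_i\otimes v_j$ by $q^{(1-2i)(1-2j)/2}$, giving $q^{1/2}$ on $v_0\otimes v_0$ and $v_1\otimes v_1$ and $q^{-1/2}$ on the two mixed vectors. For $a=1$, the operator $X_+\otimes X_-$ is nonzero only on $v_1\otimes v_0$, where it produces $v_0\otimes v_1$; the accompanying exponential $\exp(\tfrac{h}{4}[H\otimes H+H\otimes 1-1\otimes H])$ has eigenvalue $(-1)+(1)-(-1)=1$ on $v_0\otimes v_1$, and combined with the scalar coefficient $(q-q^{-1})q^{-1}$ from \eqref{slR} this yields the single off-diagonal contribution $(q-q^{-1})q^{-1/2}\,v_0\otimes v_1$. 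Collecting these determines $R$ explicitly, with its only off-diagonal entry sitting in the $v_1\otimes v_0\mapsto v_0\otimes v_1$ slot.

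Finally I would apply $\tau$, which swaps tensor factors, and then $\alpha^{\otimes 2}$, which by \eqref{alphavn} rescales $v_i\otimes v_j$ by $\gamma^{-(i+j)}$. Tracking these scalars through the four columns and factoring out the common $q^{-1/2}$ produces the matrix \eqref{V1}. The main bookkeeping hazard, and the step I would watch most carefully, is keeping the exponential eigenvalue factors straight — in particular distinguishing the $q$-power coming from the $H\otimes H$ eigenvalue in each term from the separate $\gamma$-power introduced by $\alpha^{\otimes 2}$. It is easy to misassign the $\gamma$-exponent on $v_1\otimes v_0$, which is $\gamma^{-1}$ rather than $\gamma^{-2}$; this is exactly what fixes the $(3,3)$ entry as $\gamma^{-1}(q-q^{-1})$ rather than $\gamma^{-2}(q-q^{-1})$.
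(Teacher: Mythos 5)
Your proposal is correct and follows essentially the same route as the paper's proof: truncate $R$ to the $a=0,1$ terms because $X_\pm^2$ annihilates $\vone$, compute the diagonal exponential eigenvalues and the single off-diagonal contribution on $v_1\otimes v_0$, then apply $\tau$ and the rescaling $\alpha^{\otimes 2}(v_i\otimes v_j)=\gamma^{-(i+j)}v_i\otimes v_j$. All the scalar factors you record, including the coefficient $(q-q^{-1})q^{-1/2}$ and the $\gamma^{-1}$ on the $(3,3)$ entry, agree with the paper's computation.
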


\begin{proof}
Let us first compute the action of $R$ \eqref{slR} on $\vone^{\otimes 2}$.  It follows from the definition ~\eqref{rhov} that, when $n = 1$, both $X_+^a$ and $X_-^a$ act trivially on $\vone$ for $a \geq 2$, and
\begin{equation}
\label{Hv}
Hv_i = \rho(H,v_i) = (-1)^i v_i
\end{equation}
for $i = 0,1$.  Thus, only the first two terms in $R$ \eqref{slR} (corresponding to $a = 0,1$) can act non-trivially on $\vone^{\otimes 2}$.  Since $q = e^{h/2}$, these two terms are
\begin{equation}
\label{R'}
R' = q^{\frac{1}{2}(H \otimes H)} + (1 - q^{-2})q^{\frac{1}{2}(H \otimes H + H \otimes 1 - 1 \otimes H)} (X_+ \otimes X_-).
\end{equation}
Consider the action of the first term in $R'$.  It follows from \eqref{Hv} that $H^mv_0 = v_0$ and $H^mv_1 = (-1)^mv_1$ for $m \geq 0$.  Thus, we have
\begin{equation}
\label{R'2}
q^{\frac{1}{2}(H \otimes H)} = e^{\frac{h}{4}(H\otimes H)} \colon
\begin{cases}
v_0 \otimes v_0 & \mapsto~ q^{\frac{1}{2}}v_0 \otimes v_0,\\
v_0 \otimes v_1 & \mapsto~ q^{-\frac{1}{2}}v_0 \otimes v_1,\\
v_1 \otimes v_0 & \mapsto~ q^{-\frac{1}{2}}v_1 \otimes v_0,\\
v_1 \otimes v_1 & \mapsto~ q^{\frac{1}{2}}v_1 \otimes v_1.
\end{cases}
\end{equation}
For the second term in $R'$, note that $X_+ \otimes X_-$ only acts non-trivially on $v_1 \otimes v_0$ among the four basis elements $\{v_i \otimes v_j\}_{0\leq i,j\leq 1}$.  We have
\begin{equation}
\label{R'1}
(X_+ \otimes X_-)(v_1 \otimes v_0) = v_0 \otimes v_1
\end{equation}
and
\begin{equation}
\label{R'3}
q^{\frac{1}{2}(H \otimes H + H \otimes 1 - 1 \otimes H)} = e^{\frac{h}{4}(H \otimes H + H \otimes 1 - 1 \otimes H)} \colon v_0 \otimes v_1 \mapsto q^{\frac{1}{2}} v_0 \otimes v_1.
\end{equation}
The result now follows from \eqref{alphavn}, \eqref{BalphaVn}, and \eqref{R'} - \eqref{R'3}.
\end{proof}

Regard $\gamma = e^{2c}$ as a parameter, where $c$ runs through the complex numbers.  The operators $B_\alpha$ \eqref{V1} thus form a $1$-parameter family of deformations of
\[
B = q^{-\frac{1}{2}}\Brmatrix,
\]
which is a solution of the YBE \eqref{ybe} for $\vone$.


\end{document}